\apptocmd{\thebibliography}{\scriptsize}{}{}
\providecommand{\mathbold}[1]{\mathbf{#1}}
\newcommand{\dpu}{\overline{\Delta p}}
\newcommand{\dpl}{\underline{\Delta p}}
\newcommand{\fu}{\overline{f}}
\newcommand{\fl}{\underline{f}}
\newtheorem{theorem}{Theorem}
\newtheorem{prop}[theorem]{Proposition}
\newtheorem{corollary}[theorem]{Corollary}
\newtheorem{definition}{Definition}
\newtheorem{remark}{Remark}
\newcommand{\dvij}[1]{\ifthenelse{\boolean{showcomments}}
{ \textcolor{red}{(Dj says:  #1)}}{}}
\newcommand{\adam}[1]{\ifthenelse{\boolean{showcomments}}
{ \textcolor{red}{(Adam says:  #1)}}{}}
\newcommand{\niangjun}[1]{\ifthenelse{\boolean{showcomments}}
{ \textcolor{red}{(Niangjun says:  #1)}}{}}
\newcommand{\navid}[1]{\ifthenelse{\boolean{showcomments}}
{ \textcolor{blue}{(Navid says:  #1)}}{}}
\newcommand{\note}[1]{\ifthenelse{\boolean{showcomments}}
{ \textcolor{red}{(#1)}}{}}
\newcommand{\todo}[1]{\ifthenelse{\boolean{showcomments}}
{ \textcolor{red}{(To do: #1)}}{}}
\newcommand{\addcites}[0]{\ifthenelse{\boolean{showcomments}}
{ \textcolor{red}{(add cite(s))}}{}}
\newcommand{\addcite}[0]{\ifthenelse{\boolean{showcomments}}
{ \textcolor{red}{(add cite(s))}}{}}
\newcommand{\hide}[1]{}
\renewcommand{\phi}{\varphi}
\providecommand{\mathbbm}{\mathbb} % In case we don't load bbm
\newcommand{\mtx}[1]{\mathbold{#1}}
\newcommand{\cpi}{c_{\pi(i)}}
\DeclarePairedDelimiter\floor{\lfloor}{\rfloor}
\renewcommand{\mathbf}{{}}
\renewcommand{\boldsymbol}{{}}
\begin{document}

\newtheorem{lemma}[theorem]{Lemma}
\newtheorem{proposition}[theorem]{Proposition}
\title{Opportunities for Price Manipulation by Aggregators in Electricity Markets}
\author{Navid~Azizan~Ruhi,
        Krishnamurthy~Dvijotham,
        Niangjun~Chen,
        and~Adam~Wierman\vspace*{-13pt}% <-this % stops a space
\thanks{The authors are with the Department of Computing and Mathematical Sciences, California Institute of Technology, Pasadena,
CA, 91125 USA e-mail: \{azizan,dvij,ncchen,adamw\}@caltech.edu}% <-this % stops a space
\thanks{Manuscript received Month DD, YYYY; revised Month DD, YYYY.}}

%\markboth{IEEE Transactions on Smart Grid,~Vol.~0, No.~0, Month~YYYY}%
%{Azizan Ruhi \MakeLowercase{\textit{et al.}}: Opportunities for Price Manipulation by Aggregators in Electricity Markets}
\maketitle

\begin{abstract}
Aggregators are playing an increasingly crucial role in the integration of renewable generation in power systems. However, the intermittent nature of renewable generation makes market interactions of aggregators difficult to monitor and regulate, raising concerns about potential market manipulation by aggregators.  In this paper, we study this issue by quantifying the profit an aggregator can obtain through strategic curtailment of generation in an electricity market. We show that, while the problem of maximizing the benefit from curtailment is hard in general, efficient algorithms exist when the topology of the network is radial (acyclic).  Further, we highlight that significant increases in profit are possible via strategic curtailment in practical settings. 
\end{abstract}

\begin{IEEEkeywords}
Aggregators, renewables, optimal curtailment, market power, locational marginal price (LMP).
\end{IEEEkeywords}

\IEEEpeerreviewmaketitle

%%%%%%%%%%%%%%%%%%%%%%%%%%%%%%%%%%%%%%%%%%%%%%%%%%%%%%%%%%%%%%%%%%%%%%%%%%%%%%%%
\section{Introduction}

Increasing the penetration of distributed, renewable energy resources into the electricity grid is a crucial part of building a sustainable energy landscape.  To date, the entities that have been most successful at promoting and facilitating the adoption of renewable resources have been \emph{aggregators}, e.g. as SolarCity, Tesla, Enphase, Sunnova, SunPower, ChargePoint \cite{bullis2012solarcity,john2014solarcity,Wesoff2016earnings}. These aggregators install and manage rooftop solar installations as well as household energy storage devices and electric vehicle charging systems. Some have fleets with upwards of 800 MW distributed energy resources \cite{SEIA,solarpowerworld}, and the market is expected to triple in size by 2020 \cite{landscape, outlook}.

Aggregators play a variety of important roles in the construction of a sustainable grid.  First, and foremost, they are on the front lines of the battle to promote installation of rooftop solar and household energy storage, pushing for wide-spread adoption of distributed energy resources by households and businesses.  Second, and just as importantly, they provide a single interface point where utilities and Independent System Operators (ISOs) can interact with a fleet of distributed energy resources across the network in order to obtain a variety of services, from renewable generation capacity to demand response.  This service is crucial for enabling system operators to manage the challenges that result from unpredictable, intermittent renewable generation, e.g., wind and solar.

However, in addition to the benefits they provide, aggregators also create new challenges -- both from the perspective of the aggregator and the perspective of the system operator.  On the side of the aggregator, the management of a geographically diverse fleet of distributed energy resources is a difficult algorithmic challenge.  On the side of the operator, the participation of aggregators in electricity markets presents unique challenges in terms of monitoring and mitigating the potential of exercising market power.  In particular, unlike traditional generation resources, the ISO cannot verify the availability of the generation resources of aggregators. While the repair schedule of a conventional generator can be made public, the downtime of a solar generation plant and the times when solar generation is not available cannot be scheduled or verified after the fact. Thus, aggregators have the ability to strategically curtail generation resources without the knowledge of the ISO, and this potentially creates significant opportunities for them to manipulate prices.

These issues are particularly salient given current proposals for distribution systems. Distribution systems (which are typically radial networks) are heavily impacted by the introduction of distributed energy resources. As a result, there are a variety of current proposals to start distribution-level power markets (see, for example \cite{NYREV} \cite{FERCOrder}), operated by Distribution System Operators (DSOs). A future grid may even involve a hierarchy of system operators dealing with progressively larger areas, net load and net generation. In such a scenario, aggregators could end up having a significant proportion of the market share, and such markets may be particularly vulnerable to strategic bidding practices of the aggregators. Thus, understanding the potential for these aggregators to exercise market power is of great importance, so that regulatory authorities can take appropriate steps to mitigate it as needed.

\subsection{Summary of Contributions}

This paper addresses both the algorithmic challenge of managing an aggregator and the economic challenge of measuring the potential for an aggregator to manipulate prices.  Specifically, this work provides a new algorithmic framework for managing the participation of an aggregator in electricity markets, and uses this framework to evaluate the potential for aggregators to exercise market power.  To those ends, the paper makes three main contributions.

First, we introduce a new model for studying the management of an aggregator.  Concretely, we introduce a model of an aggregator in a two-stage market, where the first stage is the \textit{ex-ante} (or day-ahead) market that decides the generation schedule based on the solution to a security-constrained economic dispatch problem and the second stage is the \textit{ex-post} (or real-time) market to perform fine adjustment via Locational Marginal Prices (LMPs) based on updated information.

Second, we provide a novel algorithm for managing the participation of an aggregator in the two stage market.  The problem is NP-hard in general and is a bilevel quadratic program, which are notoriously difficult in practice.  However, we develop an efficient algorithm that can be used by the aggregators in radial networks to approximate the optimal curtailment strategy and maximize their profit (Section~\ref{sec: DP}). Note that the algorithm is not just relevant for aggregators; it can also be used by the operator to asses the potential for strategic curtailment. The key insight in the algorithm is that the optimization problem can be decomposed into ``local'' pieces and be solved approximately using a dynamic programming over the graph.

Third, we quantify opportunities for price manipulation (via strategic curtailment) by aggregators in the two stage market (Section~\ref{sec:aggregator}). Our results highlight that, in practical scenarios, strategic curtailment can have a significant impact on prices, and yield much higher profits for the aggregators. In particular, the prices can be impacted up to a few tens of \$/MWh in some cases, and there is often more than 25\% higher profit, even with curtailments limited to 1\%. Further, our results expose a connection between the profit achievable via curtailment and a new market power measure introduced in \cite{yu2014wind}.

\subsection{Related Work}

This paper connects to, and builds on, work in four related areas: (i) Quantifying and mitigating market power, (ii) Cyber-attacks in the grid, (iii) Algorithms for managing distributed energy resources, and (iv) Algorithms for bilevel programs.  %We discuss the contributions of the paper to each literature in the following.

\paragraph*{Quantifying market power in electricity markets} There  is a large volume of literature that focuses on identifying and measuring market power for generators in an electricity market, see \cite{twomey2015} for a recent survey.

Early works on market power analysis emerged from microeconomic theory suggest measures that ignore transmission constraints. For example, \cite{bushnell1999} introduced the \textit{pivotal supplier index} (PSI), which is a binary value indicating whether the capacity of a generator is larger than the supply surplus, \cite{sheffrin2002} later refined PSI by proposing \textit{residential supply index} (RSI). RSI is used by the California ISO to assure price competitiveness \cite{caiso-practice}. The electricity reliability council of Texas uses the \textit{element competitiveness index} (ECI) \cite{ercot-practice}, which is based on the \textit{Herfindahl-Hirschmann index} (HHI) \cite{hhi-reference}.

Market power measures considering transmission constraints have emerged more recently.  Some examples include,  e.g., \cite{scheffman1987geographic, borenstein1995market,oren1997economic,cardell1997market, xu2007transmission}, and \cite{xu2002transmission}.  Interested readers can refer to \cite{bose2014unifying}, which proposes a functional measure that unifies the structural indices measuring market power on a transmission constrained network in the previous work.

In contrast to the large literature discussed above, the literature focused on market power of renewable generation producers is limited.
Existing works such as \cite{yu2014wind} and \cite{twomey2010wind} study market power of wind power producers ignoring transmission constraints.  The key differentiator of the work in this paper is that the use of the Locational Marginal Price (LMP) framework, which is standard practice in the electricity market \cite{ott2003experience, zheng2006ex}, allows this work to offer insight about market power of aggregators when transmission capacity is limited.

\paragraph*{Cyber-attacks in the grid} The model and analysis in this paper is also strongly connected to the cyber security research community, which has studied how and when a malicious party can manipulate the spot price in electricity markets by compromising the state measurement of the power grid via false data injection, e.g., see \cite{bi2013false,6840294,xie2010false,xie2011integrity,liu2011false}.

In particular, \cite{xie2010false, xie2011integrity} shows that if a malicious party can corrupt sensor data, then it can create an arbitrage opportunity.  Further, \cite{bi2013false} shows that such attacks can impact both the real time spot price and future prices by causing line congestions.

In this paper, we do not allow aggregators to corrupt the state measurements of the power system, rather we consider a perfectly legal approach for price manipulation: strategic curtailment.  However, strategic curtailment in the \textit{ex-post} market can gain extra profit to the detriment of the power system, which is a similar mechanism to those highlighted in cyber attack literature.  Technically, the work in this paper makes significant algorithmic contributions to the cyber-attack literature.  In particular, the papers mentioned above focus on algorithmic heuristics and do not provide formal guarantees.  In contrast, our work presents a polynomial time algorithm that provably maximizes the profit of the aggregator.

\paragraph*{Algorithms for managing distributed energy resources} There has been much work studying optimal strategies for managing demand response and distributed generation resources to offer regulation services to the power grid.  This work covers a variety of contexts.  For example, researchers have studied frequency regulation\cite{lin2015experimental}\cite{hao2013ancillary} and voltage regulation (or volt-VAR control) \cite{zhang2015optimal}\cite{ModelFreeVoltage}. A separate line of work has been work on designing incentives to encourage distributed resources to provide services to the power grid \cite{negrete2012markets}\cite{chen2014individual}. However, the current paper is distinct from all the work above in that we study strategic behavior by an aggregator of distributed resources.  Prior work does not model the strategic manipulation of prices by the aggregator.

\paragraph*{Algorithms for bilevel programs} The optimization problem that the strategic aggregator solves is a bilevel program, since the objective (aggregator's profit) depends on the locational prices (LMPs). The LMPs are constrained to be equal to optimal dual variables arising from economic dispatch-based market clearing procedure. These types of problems have been extensively studied in the literature, and fall under the class of Mathematical Programs with Equilibrium Constraints (MPECs) \cite{ferris2002mathematical}. Even if the optimization problems at the two levels is linear, the problem is known to be NP-hard \cite{ben1990computational}. Global optimization algorithms \cite{gumucs2001global} can be used to solve this problems to arbitrary accuracy (compute a lower bound on the objective within a specified tolerance of the global optimum). However, these algorithms use a spatial branch and bound strategy, and can take exponential time in general. In contrast, solvers like PATH \cite{dirkse1995path}, while practically efficient for many problems, are only guaranteed to find a local optimum. In this paper, we show that for tree-structured networks (distribution networks), an $\epsilon$-approximation of the global optimum can be computed in time linear in the size of the network and polynomial in $\frac{1}{\epsilon}$. %This enables us to solve the strategic aggregator's problem efficiently, thereby leading to an efficient regulatory tool that can be used to understand situations under which there is a significant incentive for an aggregator to engage in strategic behavior. 
%%%%%%%%%%%%%%%%%%%%%%%%%%%%%%%%%%%%%%%%%%%%%%%%%%%%%%%%%%%%%%%%%%%%%%%%%%%%%%%%
\section{System Model}
In this section we define the power system model that serves as the basis for the paper and describe how we model the way the Independent System Operator (ISO) computes the Locational Marginal Prices (LMPs). Locational marginal pricing is adopted by the majority of power markets in the Unites States \cite{zheng2006ex}, and our model is meant to mimic the operation of two-stage markets like ISO New England, PJM Interconnection, and Midcontinent ISO, that use ex-post pricing strategy for correcting the ex-ante prices \cite{ott2003experience, zheng2006ex}.

We consider a power system with $n$ nodes (buses) and $t$ transmission lines. The generation and load at node $i$ are denoted by $p_i$ and $d_i$ respectively, with $\mathbf{p}=\begin{bmatrix}p_1,\dots,p_n\end{bmatrix}^T$ and $\mathbf{d}=\begin{bmatrix}d_1,\dots,d_n\end{bmatrix}^T$. We use $[n]$ to denote the set of buses $\{1,\ldots,n\}$. 

The focus of this paper is on the behavior of an aggregator, which owns generation capacity, possibly at multiple nodes.  We assume that the aggregator has the ability to curtail generation without penalty, e.g., by curtailing the amount of wind/solar generation or by not calling on demand response opportunities.  Let $N_a\subseteq [n]$ be the nodes where the aggregator has generation and denote its share of generation at node $i\in N_a$ by $p^a_i$ (out of $p_i$). The curtailment of generation at this node is denoted by $\alpha_i$, where $0\leq \alpha_i\leq p^a_i$. We define our model for the decision making process of the aggregator with respect to curtailment in Section \ref{sec:aggregator}.

Together, the net generation delivered to the grid is represented by $\mathbf{p}-\boldsymbol{\alpha}$, where $\alpha_j=0 \ \forall j\not\in N_a$. The flow of lines is denoted by $\mathbf{f}=\begin{bmatrix}f_1,\dots,f_t\end{bmatrix}^T$, where $f_l$ represents the flow of line $l$: $
\mathbf{f}=\mathbf{G (p-\boldsymbol{\alpha}-d)},
$
where $\mathbf{G}\in \mathbb{R}^{t\times n}$ is the matrix of generation shift factors \cite{shift_factor}. We also define $\mathbf{B}\in \mathbb{R}^{n\times t}$ as the link-to-node incidence matrix that transforms line flows back to the net injections as $
\mathbf{p-\boldsymbol{\alpha}-d}=\mathbf{Bf}.$

%%%%%%%%%%%%%%%%%%%%%%%%%%%%%%%
%\subsection{Ex-Ante Dispatch}
%In the ex-ante (day-ahead) market, the ISO solves a security-constrained economic dispatch program \cite{ott2003experience} based on the generation offers, system constraints, and forecasted values for demand and renewable generation. At this stage, the amount of renewable generation is a random variable, where its distribution is assumed to be common knowledge \cite{morales2013integrating}. Our main focus is on the ex-post (real-time) market, in which the aggregator acts.
\subsection{Real-Time Market Price}\label{Real-Time Market Price}

At the end of each dispatch interval, in real time, the ISO obtains the current values of generation, demands and flows from the state estimator. Based on this information, it solves a constrained optimization problem for market clearing. The objective of the optimization is to minimize the total cost of the network, based on the current state of the system. The ex-post LMPs are announced as a function of the optimal Lagrange multipliers of this optimization. Mathematically, the following program has to be solved.
\begin{subequations}\label{ISO_opt}
\begin{align}
& \underset{\mathbf{f}}{\text{minimize}} & & \mathbf{c^TBf} \\
& \text{subject to} \notag\\
& \boldsymbol{\lambda}^-,\boldsymbol{\lambda}^+: && \mathbf{\dpl} \leq \mathbf{Bf-p+\boldsymbol{\alpha}+d} \leq \mathbf{\dpu}\label{const: p}\\
& \boldsymbol{\mu}^-,\boldsymbol{\mu}^+: && \mathbf{\fl} \leq \mathbf{f} \leq \mathbf{\fu}\label{const: f}\\
& \boldsymbol{\nu}: && \mathbf{f}\in \mathrm{range} (\mathbf{G})\label{const: range}
\end{align}
\end{subequations}
%\niangjun{How is the dual variable of \eqref{const: range} defined? Are we using the dual variable for \eqref{eqn: Hf=0}? It seems $H$ in \eqref{eqn: Hf=0} is not unique. For any feasible $H$, $H' = k H$ for $k \in \mathbb{R}$ also satisfies \eqref{eqn: Hf=0}, but the dual variables for $H$ and $H'$ in \eqref{eqn: Hf=0} may be different.}\navid{Yes, $\mathbf{H'}=k\mathbf{H}$ also satisfies \eqref{eqn: Hf=0}, and the new dual variable would be $\boldsymbol{\nu'}=\boldsymbol{\nu}/k$ and all other dual variables remain the same. The LMP and the KKT conditions remain the same (notice that $\mathbf{H'}^T\boldsymbol{\nu'}=\mathbf{H}^T\boldsymbol{\nu}$ in Eq. \eqref{eqn: stationarity}).}
In the above, $c_i$ is the offer price for the generator $i$. $f_l$ is the desired flow of line $l$, and $\mathbf{Bf}=\mathbf{p+\Delta p-\boldsymbol{\alpha}-d}$, where $\Delta p_i$ is the desired amount of change in the generation of node $i$. Constraint \eqref{const: p} enforces the upper and lower limits on the change of generations, and constraint \eqref{const: f} keeps the flows within the line limits. In practice, the generation change limits are often set to be constant values of $\dpl_i = -2$ and $\dpu_i = 0.1$, for all $i$, \cite{patton20142013}.
%,li2010improving} \niangjun{\cite{li2010improving} actually argues against using $\dpl_i = -2$ and $\dpu_i = 0.1$, maybe we shouldn't cite it here}.
The last constraint ensures that $f_l$ are valid flows, i.e. $\mathbf{f=G\tilde{p}}$ for some generation $\mathbf{\tilde{p}}$.
Variables $\boldsymbol{\lambda}^-,\boldsymbol{\lambda}^+\in\mathbb{R}^n_+$, $\boldsymbol{\mu}^-,\boldsymbol{\mu}^+\in\mathbb{R}^t_+$ and $\boldsymbol{\nu}\in\mathbb{R}^{t-\mathrm{rank}(G)}$ denote the Lagrange multipliers (dual variables) corresponding to constraints \eqref{const: p}, \eqref{const: f} and \eqref{const: range}.
%If ${\mu_l}^+ >0 $, the line $l$ is said to be ``positively congested,'' and if ${\mu_l}^- >0 $, it is said to be ``negatively congested.''

Note that the ISO does not physically redispatch the generations, and the optimal values of the above program are just the desired values. In fact, by announcing the (ex-post) LMPs, the ISO provides incentives for the generators to adjust their generation according to the goals of the ISO \cite{zheng2006ex}. %As it can be seen from \eqref{LMP3}, if the generation needs to be increased ($\Delta p_i^* = {\Delta p}^{\max}$) the price rises above $c_i$, and if the generation needs to be decreased ($\Delta p_i^* = {\Delta p}^{\min}$) the price falls below $c_i$.

\begin{definition}
The \textbf{ex-post locational marginal price (LMP) }of node $i$ at curtailment level of $\boldsymbol{\alpha}$, denoted by $\lambda_i(\boldsymbol{\alpha})$, is 
\begin{equation}
\lambda_i(\boldsymbol{\alpha}) %&=\nu^ - \sum\limits_{l=1}^t {\mu_l}^{+} G_{l,i} + \sum\limits_{l=1}^t {\mu_l}^{-} G_{l,i},\label{LMP1}
= c_i + \lambda_i^{+}(\alpha) - \lambda_i^{-}(\alpha). \label{LMP2}
\end{equation}
\end{definition}
We assume that there is a unique optimal primal-dual pair, and therefore the LMPs are unique. In general, there are several ways that ISOs ensure such condition, for instance by adding a small quadratic term to the objective.

%%%%%%%%%%%%%%%%%%%%%%%%%%%%%%%%%%%%%%%%%%%%%%%%%%%%%%%%%%%%%%%%%%%%%%%%%%%%%%%%
%%%%%%%%%%%%%%%%%%%%%%%%%%%%%%%%%%%%%%%%%%%%%%%%%%%%%%%%%%%%%%%%%%%%%%%%%%%%%%%%
\section{The Market Behavior of the Aggregator}
\label{sec:aggregator}

The key feature of our model is the behavior of the aggregator. As mentioned before, aggregators have generation resources at multiple locations in the network and can often curtail generation resources without the knowledge of the ISO. Of course, such curtailment may not be in the best interest of the aggregator, since it means offering less generation to the market. But, if through curtailment, prices can be impacted, then the aggregator may be able to receive higher prices for the generation offered or make money through arbitrage of the price differential.

In this section, we develop a model for a strategic aggregator, which we then use in order to understand in what situations curtailment may be beneficial.

\subsection{Preliminaries}

%To quantify the profit that the aggregator makes due to the curtailment, let us take a look at the total revenues and costs. When there is no curtailment, the aggregator's revenue from bus $i$ is $\lambda_i(0)\cdot p^a_i$. Furthermore, the cost of generation is a function of the generation $C_i(p^a_i)$, which is known to the aggregator.
To quantify the profit that the aggregator makes due to the curtailment, let us take a look at the total revenue in different production levels.
\begin{definition}
%\begin{subequations}
%The total profit under the normal conditions is
%\begin{equation}
%\gamma_0 = \sum_{i\in N_a}\left(\lambda_i(0)\cdot p^a_i-C_i(p^a_i)\right) .
%\end{equation}
%Similarly, at the curtailment level $\boldsymbol{\alpha}$, the total profit would be
%\begin{equation}
%\gamma_1(\boldsymbol{\alpha}) = \sum_{i\in N_a}\left(\lambda_i(\boldsymbol{\alpha})\cdot (p^a_i-\alpha_i)-C_i(p^a_i-\alpha_i)\right) .
%\end{equation}

%We should reiterate that the way curtailment affects the price $\lambda_i$ is through decreasing the generation, and therefore changing the %congestion pattern ($\mathcal{C}_+,\mathcal{C}_-$) (Eqs. \ref{C+} and \ref{C-}) and consequently
%optimal Lagrange multipliers of \eqref{ISO_opt}.

We define the \textbf{curtailment profit (CP)} as the change in profit of the aggregator as a result of curtailment:%generation level of $\mathbf{p}-\boldsymbol{\alpha}$ (aggregation) and the generation level of $\mathbf{p}$ (normal conditions):
\begin{equation}\label{CP}
\gamma(\boldsymbol{\alpha}) = \sum_{i\in N_a}\left(\lambda_i(\boldsymbol{\alpha})\cdot (p^a_i-\alpha_i)-\lambda_i(0)\cdot p^a_i\right)
\end{equation}
%which is a function of both normal generations $\mathbf{p}$ and the amount of aggregation $\mathbf{a}$.
%\end{subequations}
\end{definition}

%Note that $C_i^F$ is no longer present in Eq. \ref{CP}, which confirms that the fixed costs do not play any roll in the curtailment profit.
Note that the curtailment profit can be positive or negative in general. We say a curtailment level $\boldsymbol{\alpha}>0$ is profitable if $\gamma(\boldsymbol{\alpha})$ is strictly positive.

%\begin{remark}
%Assuming increasing marginal cost of generation, it is profitable for the aggregator to curtail a strictly positive amount of generation when
%\begin{equation}\label{sufficient_general}
%\sum_{i\in N_a} \left(\lambda_i(\boldsymbol{\alpha}).(p^a_i-\alpha_i)-\lambda_i(0).p^a_i\right) >0.
%\end{equation}
%That is because the cost $C_i(\cdot)$ is an increasing function, for all $i\in N_a$, and therefore $C_i(p^a_i)-C_i(p^a_i-\alpha_i)\geq 0$.% and for $\gamma(\boldsymbol{\alpha})>0$ it is sufficient to have $\sum_{i\in N_a} \left(\lambda_i(\mathbf{p}-\boldsymbol{\alpha}).(p^a_i-\alpha_i)-\lambda_i(\mathbf{p}).p^a_i\right) >0$.
%\end{remark}

%\adam{we don't need to capitalize curtailment profit everywhere}

The curtailment profit is important for understanding when it is beneficial for the aggregators to curtail. Note that we are not concerned about the cost of generation here, as renewables have zero marginal cost.  However, if there is a cost for generation, then that results in an additional profit during curtailment, which makes strategic curtailment more likely.

%In order to formulate the aggregator's optimization problem in the general form and assess its ability in exercising market power, we need to present the dual of the optimization problem \eqref{ISO_opt}, which (after some algebra) can be expressed in compressed form as
%\begin{subequations}\label{dual2}
%\begin{align}
% \underset{\lambda,\boldsymbol{\alpha^{\min}},\boldsymbol{\alpha^{\max}},\boldsymbol{\mu}}{\min} \quad& \langle\mathbf{\Delta p^{\boldsymbol{\max}}},\boldsymbol{\alpha^{\max}}\rangle- \langle\mathbf{\Delta p^{\boldsymbol{\min}}},\boldsymbol{\alpha^{\min}}\rangle\label{eqn: dual_obj}\\
%  \text{s.t.} \quad& \mathbf{c}-\mathbf{1}_n\lambda +\boldsymbol{\alpha^{\max}}-\boldsymbol{\alpha^{\min}}+\mathbf{G'}^T\boldsymbol{\mu}=0\label{eqn: equality_cons}\\
%& \boldsymbol{\alpha^{\max}},\boldsymbol{\alpha^{\min}},\boldsymbol{\mu}\geq 0,
%\end{align}
%\end{subequations}
%where $\mathbf{G'} \triangleq \mathrm{diag}(\mathbbm{1}_+ - \mathbbm{1}_-) \mathbf{G}$.

%A key observation about the above is that $\mathbf{G'}$ combines the geometry of the network and the relevant congestion pattern ($\mathcal{C}_+,\mathcal{C}_-$), and therefore is a function of $(\mathbf{p-a})$. Those rows of $\mathbf{G}$ that are not associated with any congested line are set to zero in $\mathbf{G'}$.

\subsection{A Profit-Maximizing Aggregator}

A natural model for a strategic aggregator is one that maximizes curtailment profit subject to LMPs and curtailment constraints. Since LMPs are the solution to an optimization problem themselves, the aggregator's problem is a bilevel optimization problem. In order to be able to express this optimization in an explicit form, let us first write the KKT conditions of the program \eqref{ISO_opt}. \vspace{.05in}

\begin{subequations}\label{KKT}
\noindent Primal feasibility:
\begin{equation}
\mathbf{\dpl} \leq \mathbf{Bf-p+\boldsymbol{\alpha}+d} \leq \mathbf{\dpu} \label{eq:DCPF}
\end{equation}
\begin{equation}
\mathbf{\fl} \leq \mathbf{f} \leq \mathbf{\fu}
\end{equation}
\begin{equation}
\mathbf{Hf=0}
\label{eqn: Hf=0}
\end{equation}
Dual feasibility:
\begin{equation}
\boldsymbol{\lambda^-},\boldsymbol{\lambda^+},\boldsymbol{\mu^{-}},\boldsymbol{\mu^{+}} \geq 0
\end{equation}
Complementary slackness:
\begin{equation}
\lambda_i^+ ((Bf)_i-p_i+\alpha_i+d_i-\dpu_i) = 0, \quad i = 1, \dots, n
\end{equation}
\begin{equation}
\lambda_i^- (\dpl_i-(Bf)_i+p_i-\alpha_i-d_i) = 0, \quad i = 1, \dots, n
\end{equation}
\begin{equation}
\mu_l^+ (f_l-\fu_l) = 0, \quad l = 1, \dots, t
\end{equation}
\begin{equation}
\mu_l^- (\fl_l-f_l) = 0, \quad l = 1, \dots, t
\end{equation}
Stationarity:
\begin{equation}\label{eqn: stationarity}
\mathbf{B}^T(\boldsymbol{c+\lambda^+-\lambda^-})+\boldsymbol{\mu}^+-\boldsymbol{\mu}^-+\mathbf{H}^T\boldsymbol{\nu} = \mathbf{0}
\end{equation}
\end{subequations}
Here $\mathbf{H}\in\mathbb{R}^{(t-\mathrm{rank}(G))\times t}$, and the range of $\mathbf{G}$ is the nullspace of $\mathbf{H}$.

Using the KKT conditions derived above, the aggregator's problem can be formulated as follows.
%After some algebra on the dual of the optimization problem \eqref{ISO_opt}, and by defining $\boldsymbol{\alpha}^T\triangleq\begin{bmatrix}\boldsymbol{\alpha^{+}}^T \boldsymbol{\alpha^{-}}^T\end{bmatrix}$, the aggregator's program can be formulated as a bilevel optimization as follows.
\begin{subequations}\label{agg_opt}
\begin{align}
\gamma^*=& \underset{\boldsymbol{\alpha},\mathbf{f},\boldsymbol{\lambda}^-,\boldsymbol{\lambda}^+,\boldsymbol{\mu}^-,\boldsymbol{\mu}^+,\boldsymbol{\nu}}{\text{maximize}} \quad \gamma(\boldsymbol{\alpha})\label{objective}\\
&  \text{subject to} \quad\notag\\
& 0\leq \alpha_i\leq p^a_i, \quad i\in N_a\label{aggnodes}\\
& \alpha_j=0, \quad j\not\in N_a \label{othernodes}\\
& \eqref{KKT}\label{constraint_lmp}
\end{align}
\end{subequations}

The objective \eqref{objective} is the curtailment profit defined in \eqref{CP}. Constraints \eqref{aggnodes} and \eqref{othernodes} indicate that the aggregator can only curtails generation at its own nodes, and the amount of curtailment cannot exceed the amount of available generation to it. Constraint \eqref{constraint_lmp}, which is the KKT conditions, enforces the locational marginal pricing adopted by the ISO. %Note that $\boldsymbol{\lambda}$ denotes the vector of all $\lambda_i$s, and $\mathbf{G'}$ is a function of $(\mathbf{p-a})$.
Note that if one assumes that curtailments of a higher amount than a limit $\tau$ can be detected by the ISO, then we can simply replace $p_i(a)$ in \eqref{aggnodes} by $\tau_i$.

An important note about this problem is that we have assumed the aggregator has complete knowledge of the network topology ($\mathbf{G}$), and state estimates ($\mathbf{p}$ and $\mathbf{d}$).  This is, perhaps, optimistic; however one would hope that the market design is such that aggregators do not have profitable manipulations even with such knowledge.  The results in this paper indicate that this is not the case.

\subsection{Connections between Curtailment Profit and Market Power}

While our setup so far seems divorced from the notion of market power, it turns out that there is a fundamental relationship between market power and the notion of curtailment profit introduced above.

As mentioned earlier, there has been significant work on market power in electricity markets, but work is only beginning to emerge on the market power of renewable generation producers.  One important work from this literature is \cite{yu2014wind}, and the following is the proposed notion of market power from that work.

\begin{definition}
For $\alpha_i^*\geq 0$, the \textbf{market power} (ability) of the aggregator is defined as
\begin{equation}
\eta_i = \left(\frac{\lambda_i(\alpha^*)-\lambda_i(0)}{\lambda_i(0)}\right)/\left(\frac{\alpha_i^*}{p_i^a}\right)
\end{equation}
\end{definition}

In this definition the value of $\eta_i$ captures the ability of the generator/aggregator to exercise market power. Intuitively, in a market with high value of $\eta_i$, the aggregator can significantly increase the price by curtailing a small amount of generation.

Interestingly, the optimal curtailment profit is closely related to this notion of market power.  We summarize the relationship in the following proposition, which is proven in Appendix~\ref{sec:marketpowerproof}.

\begin{proposition}\label{thm: marketpower}
If the curtailment profit $\gamma$ is positive then the market power $\eta_i>1$. Furthermore, the larger the curtailment profit is, the higher is the market power.
\end{proposition}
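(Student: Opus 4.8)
The plan is to reduce the statement to a single clean algebraic identity relating the per-node contribution to the curtailment profit and that node's market power, and then read off both claims from it. Fix a node $i \in N_a$ at which the aggregator curtails, and write $\lambda_i^0 = \lambda_i(0)$, $\lambda_i^* = \lambda_i(\alpha^*)$, and $r = \alpha_i^*/p_i^a$. I will assume, as is implicit in the definition of $\eta_i$, that $\lambda_i^0 > 0$, $p_i^a > 0$, and $0 < \alpha_i^* < p_i^a$ (so that $\eta_i$ is well-defined and $p_i^a - \alpha_i^* > 0$), which gives $r \in (0,1)$. From the definition of market power, solving for the perturbed price yields $\lambda_i^* = \lambda_i^0(1 + \eta_i r)$.

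First I would substitute this expression into the node-$i$ summand of the curtailment profit $\gamma_i = \lambda_i^*(p_i^a - \alpha_i^*) - \lambda_i^0 p_i^a$. A short expansion produces the identity
\begin{equation}
\gamma_i = \lambda_i^0\, p_i^a\, r\,\bigl(\eta_i(1-r) - 1\bigr). \label{eq:key}
\end{equation}
Because $\lambda_i^0\, p_i^a\, r > 0$ under the standing assumptions, the sign of $\gamma_i$ equals the sign of $\eta_i(1-r)-1$. Hence $\gamma_i > 0$ if and only if $\eta_i > 1/(1-r)$, and since $0 < r < 1$ gives $1/(1-r) > 1$, any node with positive contribution automatically has $\eta_i > 1$. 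To pass from the per-node statement to the aggregate one, I would observe that $\gamma = \sum_{i\in N_a}\gamma_i > 0$ forces at least one summand $\gamma_i$ to be strictly positive, and for that node \eqref{eq:key} delivers $\eta_i > 1$; in the single-node case $N_a = \{i\}$ the implication is immediate and in fact sharp.

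The ``furthermore'' claim follows from the same identity by monotonicity. With the curtailment level $\alpha^*$ (hence $r$, $p_i^a$, and $\lambda_i^0$) held fixed, \eqref{eq:key} exhibits $\gamma_i$ as an affine, strictly increasing function of $\eta_i$ with positive slope $\lambda_i^0\, p_i^a\, r\,(1-r)$; equivalently, both $\gamma_i$ and $\eta_i$ are strictly increasing in the realized price $\lambda_i^*$. Thus a larger curtailment profit corresponds to strictly larger market power.

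I expect the main obstacle to be conceptual rather than computational: the definition of $\eta_i$ is local to a single node while $\gamma$ is an aggregate over $N_a$, so the delicate point is choosing the right bridging quantity (the per-node contribution $\gamma_i$) and checking that the sign and monotonicity arguments survive the assumptions needed to make $\eta_i$ meaningful (strictly positive base price, and strictly positive but incomplete curtailment). The algebra in \eqref{eq:key} is routine once those assumptions are fixed, and multiplying the market-power inequality through by $p_i^a - \alpha_i^* > 0$ is the only step where preserving the inequality direction requires care.
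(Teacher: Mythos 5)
Your proof is correct, and its core is the same as the paper's: both arguments are direct algebraic manipulations relating the node-$i$ profit term $\lambda_i(\alpha^*)(p_i^a-\alpha_i^*)-\lambda_i(0)p_i^a$ to the definition of $\eta_i$. The difference is in execution, and your version is tighter in two respects. First, the paper normalizes $\gamma$ by $\lambda_i(0)(p_i^a-\alpha_i^*)$ and invokes the first-order approximation $(1-\alpha_i^*/p_i^a)^{-1}\simeq 1+\alpha_i^*/p_i^a$, then patches this with a remark that the approximation errs in the safe direction, yielding the one-sided bound $\eta_i-1\geq \frac{p_i^a}{\lambda_i(0)(p_i^a-\alpha_i^*)\alpha_i^*}\gamma$. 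Your substitution $\lambda_i^*=\lambda_i^0(1+\eta_i r)$ avoids any approximation and produces the exact factorization $\gamma_i=\lambda_i^0 p_i^a r\bigl(\eta_i(1-r)-1\bigr)$, which gives the strictly stronger two-sided characterization $\gamma_i>0\iff \eta_i>1/(1-r)$; the paper's inequality is recovered from your identity by discarding the positive term $r/(1-r)$. Second, the paper's proof silently specializes to a single-node aggregator (it writes $\gamma(\alpha^*)$ as the node-$i$ term alone), whereas you explicitly bridge the aggregate profit $\gamma=\sum_{i\in N_a}\gamma_i$ to a per-node statement by noting that a positive sum forces a positive summand. Both of these are genuine, if modest, improvements over the paper's argument; the standing assumptions you impose ($\lambda_i^0>0$, $0<\alpha_i^*<p_i^a$) are the same ones the paper uses implicitly when it asserts ``the left-hand side parameters are all positive.''
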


This proposition highlights that the notion of market power in \cite{yu2014wind} is consistent with an aggregator seeking to maximize their curtailment profit, and higher curtailment profit corresponds to more market power.

%%%%%%%%%%%%%%%%%%%%%%%%%%%%%%%%%%%%%%%%%%%%%%%%%%%%%%%%%%%%%%%%%%%%%%%%%%%%%%%%
\section{Optimizing Curtailment Profit}\label{sec: DP}

\newcommand{\br}[1]{\left({#1}\right)}
\newcommand{\delp}{\Delta p}
\newcommand{\powb}[2]{{\br{#1}}^{{#2}}}
\newcommand{\Xcal}{\mathcal{X}}
\newcommand{\el}{\ell}

%\section{The Market Power of the Aggregator on a single bus}

The aggregator's profit maximization problem is challenging to analyze, as one would expect given its bilevel form.  In fact, bilevel linear programming is NP-hard to approximate up to any constant multiplicative factor in general \cite{dempe2015bilevel}. Furthermore, the objective of the program \eqref{agg_opt} is quadratic (bilinear) in the variables, rather than linear. %More importantly, as \eqref{constraint_lmp} accentuates, LMPs are only impacted by changing the congestion pattern matrix $\mtx{G}'$. Since $\mtx{G}' = \mathrm{diag}(\mathbbm{1}_+ - \mathbbm{1}_-)\mtx{G}$, each congested line added or removed causes a discontinuous change in the congestion pattern matrix. Hence the optimization problem is hard due to this discreteness.
This combination of difficulties means that we cannot hope to provide a complete analytic characterization of the behavior of a profit maximizing aggregator.
%Thus, our analysis in this section focuses on simple characterizations of the curtailment and its effect on LMPs.

In this section, we begin with the case of a single-bus aggregator and build to the case of general multi-bus aggregators in acyclic networks.  For the single-bus aggregator, the optimal curtailment can be found exactly, in polynomial time.
For the general case, we cannot provide an exact algorithm, but we do provide a practical approximation algorithm for general multi-bus aggregators in acyclic networks (e.g. distribution networks).

%We should note that finding the optimal curtailment strategy in program \eqref{agg_opt} is NP-hard. However, given a curtailment strategy it is easy to check if it is profitable or not. Moreover, there are certain cases in which we can provide a sufficient condition that can be checked quickly and efficiently, and it ensures existence of profitable curtailment.

%We illustrate the complexity of the problem using a 3-bus network.  Then, we discuss some cases where it is possible to provide analytic results on the impact of curtailment. 

\subsection{An Exact Algorithm for a Single-Node Aggregator}

Even in the simplest case, when the aggregator has only a single node, i.e. its entire generation is located in a single bus, it is not trivial how to solve the aggregator's profit maximization problem.

The first step toward solving the problem is already difficult.  In particular, in order to understand the effect of curtailment on the profit, we first need to understand how does curtailment impact the prices -- an impact which is not monotonic in general. However, although LMPs are not monotonic in general, it turns out that in single-bus curtailment, the LMP is indeed monotonic with respect to the curtailment.  The proof of the following lemma is in Appendix~\ref{sec:monoproof}.

\begin{lemma}\label{lem: mono}
The LMP of any bus $i$ is monotonically increasing with respect to the curtailment at that bus. That is
$$\lambda_i(\boldsymbol{\alpha}')\geq \lambda_i(\boldsymbol{\alpha})$$
if $\alpha_i'>\alpha_i$, and $\alpha_j'=\alpha_j$ for all $j=[n]\backslash \{i\}$.
\end{lemma}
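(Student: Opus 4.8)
The plan is to prove the result at the level of the \emph{value function} of the ISO program, exploiting the fact that the curtailment $\boldsymbol{\alpha}$ enters \eqref{ISO_opt} only through the right-hand side of the generation-change constraint \eqref{const: p}. Define $V(\boldsymbol{\alpha})$ to be the optimal value of \eqref{ISO_opt}. First I would pass to the Lagrangian dual of \eqref{ISO_opt}. The crucial observation is that the dual feasible set --- namely $\boldsymbol{\lambda}^\pm,\boldsymbol{\mu}^\pm\ge 0$ together with the stationarity equation \eqref{eqn: stationarity} --- does not depend on $\boldsymbol{\alpha}$ at all, while the dual objective is \emph{affine} in $\boldsymbol{\alpha}$ with partial derivative in $\alpha_i$ equal to $\lambda_i^+-\lambda_i^-$. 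Consequently $V$ is a pointwise maximum of affine functions of $\boldsymbol{\alpha}$, hence convex, and $\boldsymbol{\lambda}^+(\boldsymbol{\alpha})-\boldsymbol{\lambda}^-(\boldsymbol{\alpha})$ is a subgradient of $V$ at $\boldsymbol{\alpha}$.

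Concretely, let $(\boldsymbol{\lambda}^\pm(\boldsymbol{\alpha}),\boldsymbol{\mu}^\pm,\boldsymbol{\nu})$ be the optimal dual solution at $\boldsymbol{\alpha}$. Since this point is still dual-feasible at $\boldsymbol{\alpha}'$, weak duality evaluated at $\boldsymbol{\alpha}'$ combined with strong duality at $\boldsymbol{\alpha}$ (both linear programs are feasible, and bounded because $\boldsymbol{\fl}\le\mathbf{f}\le\boldsymbol{\fu}$) yields the subgradient inequality
$$V(\boldsymbol{\alpha}') \ge V(\boldsymbol{\alpha}) + \big(\boldsymbol{\lambda}^+(\boldsymbol{\alpha})-\boldsymbol{\lambda}^-(\boldsymbol{\alpha})\big)^T(\boldsymbol{\alpha}'-\boldsymbol{\alpha}),$$
and symmetrically, interchanging the roles of $\boldsymbol{\alpha}$ and $\boldsymbol{\alpha}'$,
$$V(\boldsymbol{\alpha}) \ge V(\boldsymbol{\alpha}') + \big(\boldsymbol{\lambda}^+(\boldsymbol{\alpha}')-\boldsymbol{\lambda}^-(\boldsymbol{\alpha}')\big)^T(\boldsymbol{\alpha}-\boldsymbol{\alpha}').$$
Adding the two inequalities eliminates $V$ entirely and, using that $\boldsymbol{\alpha}'$ and $\boldsymbol{\alpha}$ agree in every coordinate except $i$, leaves the monotonicity-of-subgradients relation
$$\big[(\lambda_i^+(\boldsymbol{\alpha}')-\lambda_i^-(\boldsymbol{\alpha}')) - (\lambda_i^+(\boldsymbol{\alpha})-\lambda_i^-(\boldsymbol{\alpha}))\big]\,(\alpha_i'-\alpha_i) \ge 0.$$

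Since $\alpha_i'>\alpha_i$ the bracketed term must be nonnegative, so $\lambda_i^+(\boldsymbol{\alpha}')-\lambda_i^-(\boldsymbol{\alpha}') \ge \lambda_i^+(\boldsymbol{\alpha})-\lambda_i^-(\boldsymbol{\alpha})$; adding $c_i$ to both sides and invoking the definition \eqref{LMP2} gives exactly $\lambda_i(\boldsymbol{\alpha}')\ge\lambda_i(\boldsymbol{\alpha})$, which is the claim. The conceptual heart of the argument --- and the step I would be most careful to verify --- is that the dual of \eqref{ISO_opt} has an $\boldsymbol{\alpha}$-independent feasible region and an objective affine in $\boldsymbol{\alpha}$; once this is checked, the convexity of $V$ and the monotonicity of its subgradients are automatic. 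The only other technical point is feasibility and boundedness of both the $\boldsymbol{\alpha}$ and $\boldsymbol{\alpha}'$ instances so that strong duality applies and the LMPs are well defined, which is ensured by the standing assumption of a unique optimal primal-dual pair. An equivalent but more transparent framing is simply that the optimal value of a minimization LP is a convex function of its constraint right-hand side, and a convex function has nondecreasing one-sided derivatives along the $\alpha_i$-axis; the explicit two-inequality computation above is just the self-contained version of that fact.
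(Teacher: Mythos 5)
Your proof is correct and is essentially the paper's own argument: the paper likewise passes to the dual \eqref{ISO_dual}, observes that $\boldsymbol{\alpha}$ enters only the dual objective (affinely, with coefficient $\lambda_i^+-\lambda_i^-$ on $\alpha_i$) while the dual feasible set is $\boldsymbol{\alpha}$-independent, and its general perturbation claim for the pair \eqref{opt1}--\eqref{opt2} is precisely your monotonicity-of-subgradients inequality, obtained by combining the optimality inequalities of the dual optima at the two curtailment levels (phrased there as a proof by contradiction). The only difference is packaging: you route the argument through convexity of the value function $V$ and strong duality, whereas the paper manipulates the two dual optima directly and never needs to introduce $V$.
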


A consequence of the above lemma is that the price $\lambda_i$ is a monotonically increasing staircase function of $\alpha_i$, for any bus $i$, as depicted in Fig. \ref{plot_monotonic}. Further, the binding constraints of \eqref{ISO_opt} do not change in the intervals, and thus the dual variables remain the same, i.e., the LMPs remain constant. When a constraint becomes binding/non-binding, the LMP jumps to the next level.

In Fig. \ref{plot_monotonic}, the blue and red shaded areas are profit at the normal condition and at the curtailment, respectively. The difference between the two areas the curtailment profit. In particular, if the red area is larger than the blue one, the aggregator is able to earn a positive curtailment profit on bus $i$. The optimal curtailment $\alpha_i^*$ also happens where the red area is maximized.
It should be clear that the optimal curtailment always happens at the verge of a price change, not in the middle of a constant interval (otherwise it can be increased by curtailing less).

\begin{figure}[tpb]
\centering
\includegraphics[width=0.6\columnwidth]{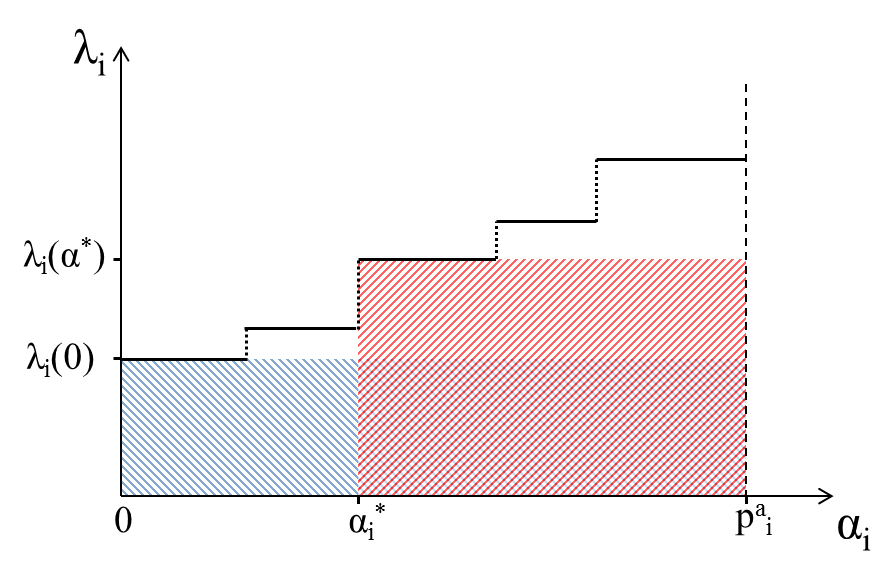}
\caption{The LMP at bus $i$ as a function of curtailed generation at that bus. Shaded areas indicate the aggregator's revenue at the normal condition and at the aggreation.}
\label{plot_monotonic}
\end{figure}

Given the knowledge of the network and state estimates, it is possible to find the jump points (i.e. where the binding constraints change) and evaluate them for profitability. Therefore, if there are not too many jumps, an exhaustive search over the jump points can yield the optimal curtailment. Based on this observation, we have the following theorem, which is proven in Appendix~\ref{sec:exactproof}.

\begin{theorem}
The exact optimal curtailment for an aggregator with a single bus, in a network with $t$ lines, can be found by an algorithm with running time $O(t^{3.373})$.
\label{thm: single_bus_opt}
\end{theorem}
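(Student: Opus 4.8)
The plan is to convert the staircase picture described just before the statement into a finite search over the price ``jump points,'' and then to bound both the number of such points and the cost of processing each one. By Lemma~\ref{lem: mono} the map $\alpha_i\mapsto\lambda_i(\boldsymbol{\alpha})$ is a nondecreasing, piecewise-constant staircase. For a single-node aggregator we have $N_a=\{i\}$, so $\gamma(\boldsymbol{\alpha})=\lambda_i(\boldsymbol{\alpha})\,(p_i^a-\alpha_i)-\lambda_i(0)\,p_i^a$; on any interval where $\lambda_i$ is constant this is an affine, decreasing function of $\alpha_i$, hence maximized at the left endpoint of the interval, i.e.\ exactly at the breakpoint where $\lambda_i$ jumps up to that level. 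The algorithm therefore enumerates the breakpoints where the binding constraints of \eqref{ISO_opt} change, evaluates $\gamma$ at each, and returns the best. Correctness is immediate from this observation, so all the content lies in the complexity count.

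The first quantitative step is to bound the number of breakpoints by $O(t)$. Here I would view \eqref{ISO_opt} as a linear program whose right-hand side in \eqref{const: p} depends affinely on the single scalar $\alpha_i$. Standard parametric-LP sensitivity analysis shows that the optimal flow $\mathbf{f}^*(\alpha_i)$ is piecewise linear in $\alpha_i$, and a breakpoint occurs precisely when some line-limit constraint in \eqref{const: f} or some generation-change constraint in \eqref{const: p} switches between binding and non-binding. There are only $2t$ line-limit constraints and $O(n)=O(t)$ generation-change constraints (a connected network on $n$ nodes has $t\geq n-1$). The monotonicity of Lemma~\ref{lem: mono} is what prevents the price level, and with it the congestion pattern, from cycling back as $\alpha_i$ sweeps from $0$ to $p_i^a$, so each constraint contributes only a bounded number of crossings along this one-dimensional path. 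This yields $O(t)$ candidate jump points.

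The second quantitative step bounds the work per jump point. Locating the next breakpoint and computing the associated multipliers $\lambda_i^+,\lambda_i^-$ --- hence the LMP \eqref{LMP2} and the profit $\gamma$ --- reduces to solving the KKT system \eqref{KKT} for the current active set, a linear system of size $O(t)$. Solving it via fast matrix multiplication costs $O(t^{\omega})$, where $\omega<2.373$ is the matrix-multiplication exponent; an alternative is to re-solve each regional LP in matrix-multiplication time. Combining the $O(t)$ jump points with the $O(t^{2.373})$ per-point cost (sorting the breakpoints is a lower-order $O(t\log t)$ term) gives the claimed $O(t^{3.373})$ bound.

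The main obstacle I anticipate is making the $O(t)$ bound on the number of breakpoints fully rigorous: a priori a parametric LP can exhibit many active-set changes, so one must genuinely use the monotonicity of Lemma~\ref{lem: mono}, together with the single-parameter structure, to argue that the price level advances monotonically and therefore changes only $O(t)$ times. Care is also needed to treat degenerate breakpoints, where several constraints change status simultaneously, and to check that the unique primal--dual assumption made after the LMP definition keeps each breakpoint and each LMP value well defined.
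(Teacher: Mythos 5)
Your proposal is correct and follows essentially the same route as the paper's proof: walk from one price jump point to the next, where each step is a linear solve for the current active set plus a ratio test over the inactive constraints at cost $O(t^{2.373})$, bound the number of jump points by $O(n+t)=O(t)$, and conclude $O(t^{3.373})$. The only substantive difference is in justifying the breakpoint count---the paper asserts it directly from the parametric structure (as $\alpha$ increases, a binding constraint that becomes non-binding never becomes binding again, so each constraint changes status at most twice), which is precisely the no-cycling property you flag as your main obstacle, rather than deducing it from the LMP monotonicity of Lemma~\ref{lem: mono}.
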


Clearly, this approach does not extend to large multi-bus aggregators.  The following sections use a different, more sophisticated algorithmic approach for that setting.

\subsection{An Approximation Algorithm for Multi-Bus Aggregators in Radial Networks}

In this section, we show that the aggregator profit maximization problem, while hard in general, can be solved in an approximate sense to determine an approximately-feasible approximately-optimal curtailment strategy in polynomial time using an approach based on dynamic programming. In particular, we show that an $\epsilon$-approximation of the optimal curtailment profit can be obtained using an algorithm with running time that is linear in the size of the network and polynomial in $\frac{1}{\epsilon}$.

Before we state the main result of this section, we introduce the notion of an approximate solution to \eqref{agg_opt} in the following definition.

\begin{definition}
A solution $\br{\alpha,f,\lambda^-,\lambda^+,\mu^-,\mu^+,\nu}$ to \eqref{agg_opt} is an \textbf{$\epsilon$-accurate solution} if the constraints are violated by at most $\epsilon$ and $\gamma\br{\alpha}\geq \gamma^*-\epsilon$.
\end{definition}

Note that, if one is simply interested in approximating $\gamma^*$ (as a market regulator would be), the $\epsilon$-constraint violation is of no consequence, and an $\epsilon$-accurate solution of \eqref{agg_opt} suffices to compute an $\epsilon$-approximation to $\gamma^*$.

Given the above notion of approximation, our main theorem is the following (proof is in Appendix \ref{sec:App}):
\begin{theorem}\label{thm:Main}
An $\epsilon$-accurate solution to the optimal aggregator curtailment problem \eqref{agg_opt} for an $n$-bus radial network can be found by an algorithm with running time  $c n\powb{\frac{1}{\epsilon}}{9}$ where $c$ is a constant that depends on the parameters $p_i^a,B,d,p,\underline{f},\overline{f}$. On a linear (feeder line) network, the running time reduces to $c n\powb{\frac{1}{\epsilon}}{6}$.
\end{theorem}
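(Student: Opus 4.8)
The plan is to exploit the radial topology to convert the bilevel program \eqref{agg_opt} into a dynamic program over the tree, after discretizing every continuous quantity on a grid of spacing proportional to $\epsilon$. First I would record the simplifications a tree buys us. A connected tree with $n$ nodes has exactly $t=n-1$ edges, and its flows are uniquely determined by the injections, so $\mathrm{rank}(\mathbf{G})=t$; consequently the range constraint \eqref{const: range} is vacuous and the variables $\mathbf{H},\boldsymbol{\nu}$ drop out of the KKT system \eqref{KKT}. The stationarity condition \eqref{eqn: stationarity} then decouples into one relation per edge, $\lambda_{\mathrm{from}(l)}-\lambda_{\mathrm{to}(l)}=\mu_l^--\mu_l^+$, where $\lambda_i=c_i+\lambda_i^+-\lambda_i^-$ is the LMP. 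Together with complementary slackness, this says the LMP is constant across an uncongested edge and jumps only across a congested one, so the dual information propagates locally along the tree. Moreover, deleting an edge $l$ splits the tree into two components, and flow conservation forces $f_l=\sum_{i\in T_l}(p_i-\alpha_i-d_i)$, where $T_l$ is the subtree below $l$; hence the only coupling between the two sides of an edge is through the scalar flow $f_l$ and the scalar boundary price.

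Second, I would discretize. Each continuous quantity -- the curtailments $\alpha_i$, the edge flows $f_l$, and the boundary prices/duals -- ranges over an interval whose width is bounded by the fixed data $p_i^a,\mathbf{B},\mathbf{d},\mathbf{p},\fl,\fu$, so a uniform grid with $O(1/\epsilon)$ points per quantity suffices. Because the objective $\gamma$ of \eqref{CP} and every constraint of \eqref{KKT} are piecewise linear or bilinear in these variables with parameter-dependent Lipschitz constants, rounding a feasible point to the nearest grid point perturbs the objective by $O(\epsilon)$ and violates each constraint by $O(\epsilon)$. After rescaling $\epsilon$ by the relevant constant, this is exactly the $\epsilon$-accurate solution of the preceding definition, so it suffices to optimize over grid points.

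Third, I would solve the discretized problem by dynamic programming. Rooting the tree arbitrarily, I associate with each edge an interface state consisting of a constant number (three) of discretized scalars -- the edge flow $f_l$, the boundary LMP, and one further price/injection coordinate needed to reconstruct primal feasibility \eqref{eq:DCPF}--\eqref{eqn: Hf=0} and complementary slackness locally. Processing nodes from the leaves to the root, the table of a node is formed by merging the tables of its children, enforcing flow conservation and the per-edge stationarity relation at that node, and then adding the node's own curtailment decision together with its separable contribution to $\gamma$. The optimum over the root's table is the answer. Each interface state ranges over $O\!\br{\powb{1/\epsilon}{3}}$ values; on a general tree, merging two children with the local decision costs $O\!\br{\powb{1/\epsilon}{9}}$ per node, while a linear feeder has a single child per node so the merge collapses to $O\!\br{\powb{1/\epsilon}{6}}$. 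Summing over the $n$ nodes gives the stated $c n\powb{1/\epsilon}{9}$ and $c n\powb{1/\epsilon}{6}$.

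The hard part will be the error analysis justifying the discretization for the \emph{bilevel} constraints: I must show that the rounded point still satisfies the KKT system \eqref{KKT}, and in particular stationarity and complementary slackness, to within $\epsilon$, and simultaneously certifies that the induced prices lie within $\epsilon$ of the true LMPs, so that the reported profit is within $\epsilon$ of $\gamma^*$. This requires uniform Lipschitz/stability bounds on the dual solution as a function of the discretized primal data, together with the structural argument that the constant number of scalars retained in each interface state genuinely captures all cross-edge coupling; driving the state dimension down to three is precisely what pins the exponents at $9$ and $6$.
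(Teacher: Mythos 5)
Your overall strategy coincides with the paper's: rewrite the KKT system locally on the tree (the range constraint and $\nu$ disappear since $\mathrm{rank}(G)=t$, and stationarity becomes a per-edge relation between neighboring LMPs and edge congestion), discretize the per-node state $(\lambda_i,f_i,\alpha_i)$ on a grid of spacing $O(\epsilon)$, relax every constraint by $\epsilon$ so that rounding the true optimizer keeps it feasible for the relaxed discretized problem, and run a leaves-to-root dynamic program whose per-node table has $O\bigl((1/\epsilon)^3\bigr)$ entries. Your Lipschitz-perturbation error analysis is also what the paper does, and your worry at the end about certifying the prices themselves is unnecessary: the paper's definition of an $\epsilon$-accurate solution only asks for constraint violation at most $\epsilon$ and objective at least $\gamma^*-\epsilon$, both of which your rounding argument already delivers.

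There is, however, one genuine gap: you never handle nodes of degree larger than three. Your complexity claim reads ``merging two children with the local decision costs $O\bigl((1/\epsilon)^9\bigr)$ per node,'' but on a general radial network a node can have arbitrarily many children, and the nodal balance constraint couples \emph{all} of the children's flows through their sum $\sum_j f_{c_j(i)}$. A joint merge at a node with $k$ children therefore costs $|\mathcal{X}|^{k+1}$, not $|\mathcal{X}|^{3}$; on a star network your DP, as stated, is exponential in $n$. The paper closes exactly this hole with an explicit reduction-to-binary-tree lemma: each node with $k>2$ children is replaced by a gadget of $O(k)$ dummy nodes with zero injection arranged in $O(\log k)$ levels, so that flow conservation at the dummy nodes merely splits the parent's flow; this turns any tree into a binary tree with $O(n)$ nodes while preserving the problem, and only after this reduction is the per-node cost $|\mathcal{X}|^3=O\bigl((1/\epsilon)^9\bigr)$ and the total bound $cn(1/\epsilon)^9$ valid. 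An alternative repair---merging children one at a time while carrying the partial flow sum as an extra discretized state coordinate---could also work, but then you must bound the range of those partial sums and redo the discretization error analysis, neither of which appears in your proposal.
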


We now give an informal description of the approximation algorithm. Consider a radial distribution network with nodes labeled $i \in [n]$, (where $1$ denotes the substation bus, where the radial network connects to the transmission grid). Radial distribution networks have a \emph{tree} topology (they do not have cycles). We denote bus $1$ as the \emph{root} of the tree, and buses with only one neighbor as \emph{leaves}. Every node (except the root) has a unique \emph{parent}, defined as the first node on the unique path connecting it to the root node. The set of nodes $k$ that have a give node $i$ as its parent are said to be its \emph{children}. It can be shown that the strategic curtailment problem on any radial distribution network can be expressed as an equivalent problem on a network where each node has maximum degree $3$ (known as a \emph{binary tree}, see Appendix \ref{sec:App}). Thus, we can limit our attention to networks of this type, where every node has a unique parent and at most $2$ children.

For a node $i$, let $c_1\br{i},c_2\br{i}$ denote its children (where $c_1=\emptyset$, $c_2=\emptyset$ is allowed since a node can have fewer than two children). We use the shorthand
\[p^{net}\br{i}= f_{c_1\br{i}}+f_{c_2\br{i}}-f_{i}-\br{p_i-\alpha_i-d_i}\]
\eqref{eq:DCPF} reduces to $\underline{\Delta p}_i \leq p^{net}\br{i}\leq \overline{\Delta p}_i$
where $f_1 = 0$ and $f_\emptyset = 0$. The matrix $H$ in \eqref{eqn: Hf=0} is an empty matrix (the nullspace of the matrix $B$ is of dimension $0$), so this constraint can be dropped. Using this additional structure, the problem \eqref{agg_opt} can be rewritten (after some algebra) as:
\begin{subequations}\label{opt:aggrTree}
\begin{align}
& \underset{\lambda,f,\alpha}{\text{maximize}}\qquad \sum\limits_{i=1}^n \lambda_i\br{p_i^a-\alpha_i} \\
& \text{subject to}\notag \\
&0\leq \alpha_i\leq p_i^a, \quad i\in [n] \\
& \dpl_i \leq p^{net}\br{i} \leq \dpu_i, \quad i\in [n]\\
&\fl_i \leq f_i \leq \fu_i, \quad i\in [n]\setminus \{1\}\\
&
\lambda_i \quad \begin{cases}\leq c_i, & \text{ if } p^{net}\br{i}=\dpl_i\\
					= c_i, & \text{ if } \dpl_i<p^{net}\br{i}<\dpu_i\\\geq c_i, & \text{ if } p^{net}\br{i}=\dpu_i\end{cases} ,i \in [n] \\
& \lambda_{c_j\br{i}} -\lambda_{i} \begin{cases}\geq 0, & \text{ if } f_{i}=\fl_{i}\\= 0, & \text{ if } \fl_{i} < f_{i} < \fu_{i}\\\leq 0, & \text{ if } f_{i}=\fu_{i}\end{cases} ,i\in [n],j=1,2
\end{align}
\end{subequations}
where $\lambda_i$ is the LMP at bus $i$. Note that we assumed that there is some aggregator generation and potential curtailment at every bus (however this is not restrictive, since we can simply set $p_i^a=0$ at buses where the aggregator owns no assets).

Define $x_i = (\lambda_i, f_i, \alpha_i)$. \eqref{opt:aggrTree} is of the form:\\
$\underset{x:h_i\br{x_i,x_{c_1\br{i}},x_{c_2\br{i}}}\leq 0,i\in [n]  }{\max}  \sum\limits_{i=1}^n g_i(x_i)
$
for some functions $g_i(.)$ and $h_i(.)$. This form is amenable to dynamic programming, since if we fix the value of $x_i$, the optimization problem for the subtree under $i$ is decoupled from the rest of the network. Set $\kappa_n\br{x}=0$, define $\kappa_i$ for $i<n$ recursively as:\\
$\kappa_i\br{x} =\max_{\substack{ x_{c_1\br{i}},x_{c_2\br{i}} \\ h_i\br{x,x_{c_1\br{i}},x_{c_2\br{i}}}\leq 0}} \sum_{j=1}^2 g_{c_j\br{i}}\br{x_{c_j\br{i}}}+\kappa_{c_j\br{i}}\br{x_{c_j\br{i}}}$
Then, the optimal value can be computed as $\gamma^* = \max_{x} \kappa_1\br{x}+g_1\br{x}.$ However, the above recursion requires an infinite-dimensional computation at every step, since the value of $\kappa_i$ needs to be calculated for \emph{every} value of $x$. To get around this, we note that the variables $\lambda_i,f_i,\alpha_i$ are bounded, and hence $x_i$ can be discretized to lie in a certain set $\Xcal_i$ such that every feasible $x_i$ is at most $\delta(\epsilon_i)$ away (in infinity-norm sense) from some point in $\Xcal_i$ (Lemma \ref{lem:Disc}). The discretization error can be quantified, and this error bound can be used to relax the constraint to $h_i\br{x_i,x_{i+1}} \leq \epsilon$ guaranteeing that any solution to \eqref{agg_opt} is feasible for the relaxed constraint. This allows us to define a dynamic program (algorithm \ref{alg:DPtree}).
\begin{figure}[t]
\centering
\includegraphics[width=.25\columnwidth]{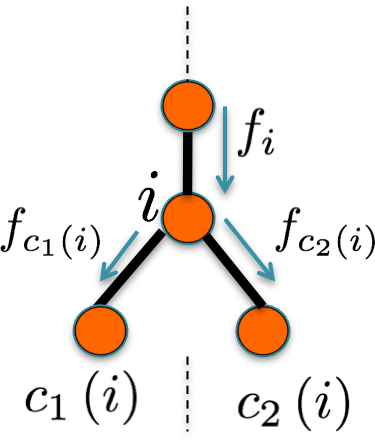}
\caption{The representation of a binary tree. For any node $i$, and its children denoted $c_1(i), c_2(i)$.}
\label{fig:tree}
\end{figure}

\begin{algorithm}[H]\caption{Dynamic programming on binary tree}\label{alg:DPtree}
\begin{center}
\begin{algorithmic}
\State $S \gets \{i: c_1\br{i}=\emptyset,c_2\br{i}=\emptyset\}$
\State $\kappa_i\br{x} \gets 0\ \forall x \in \Xcal_i,i\in S$
\While{$|S| \leq n $}
\State $S^\prime \gets \{i\not\in S: c_1\br{i},c_2\br{i}\in S\}$
\State $\forall i \in S^\prime, \forall x \in \Xcal_i$:
\[\kappa_i\br{x} \gets \max_{\substack{ x^\prime_1 \in \Xcal_{c_1(i)},x^\prime_2 \in \Xcal_{c_2(i)} \\ h_i\br{x,x^\prime_1,x^{\prime}_2}\leq \epsilon}}\sum_{j=1,2} g_{c_j\br{i}}\br{x^\prime_j}+\kappa_{c_j(i)}\br{x^\prime} \]
\State $S\gets S \cup S^{\prime}$
\EndWhile
\State $\gamma \gets \max_{x \in \Xcal_1} \kappa_1\br{x}+g_1\br{x}$
\end{algorithmic}
\end{center}
\end{algorithm}	
The algorithm essentially starts at the leaves of the tree and proceeds towards the root, at each stage updating $\kappa$ for nodes whose children have already been updated (stopping at root). Along with the discretization error analysis in Appendix \ref{sec:App}, this essentially concludes Theorem~\ref{thm:Main}.

It is worth noting that previous work on distribution level markets have used AC power flow models (at least in some approximate form) due to the importance of voltage constraints and reactive power in a distribution system \cite{ntakou2014distribution}. Our approach extends in a straightforward way to this setting as well, as the dynamic programming structure remains preserved (the KKT conditions will simply be replaced by the corresponding conditions for the AC based market clearing mechanism).

%%%%%%%%%%%%%%%%%%%%%%%%%%%%%%%%%%%%%%%%%%%%%%%%%%%%%%%%%%%%%%%%%%%%%%%%%%%%%%%%
\section{The Impact of Strategic Curtailment}
\label{sec:experiment}

Using the algorithms developed in the previous sections, we can now move to characterizing the potential impact of strategic curtailment. We do this by first providing an illustrative example of how curtailment leads to a larger profit for a simple single-bus aggregator in a small, 6-bus network. Then, we consider curtailment in larger networks and show the global effect of strategic behavior. We use IEEE 14-, 30-, and 57-bus test cases, and their enhanced versions from NICTA Energy System Test case Archive \cite{coffrin2014nesta}.

\begin{figure}[htb]
\centering
\includegraphics[width=.7\columnwidth]{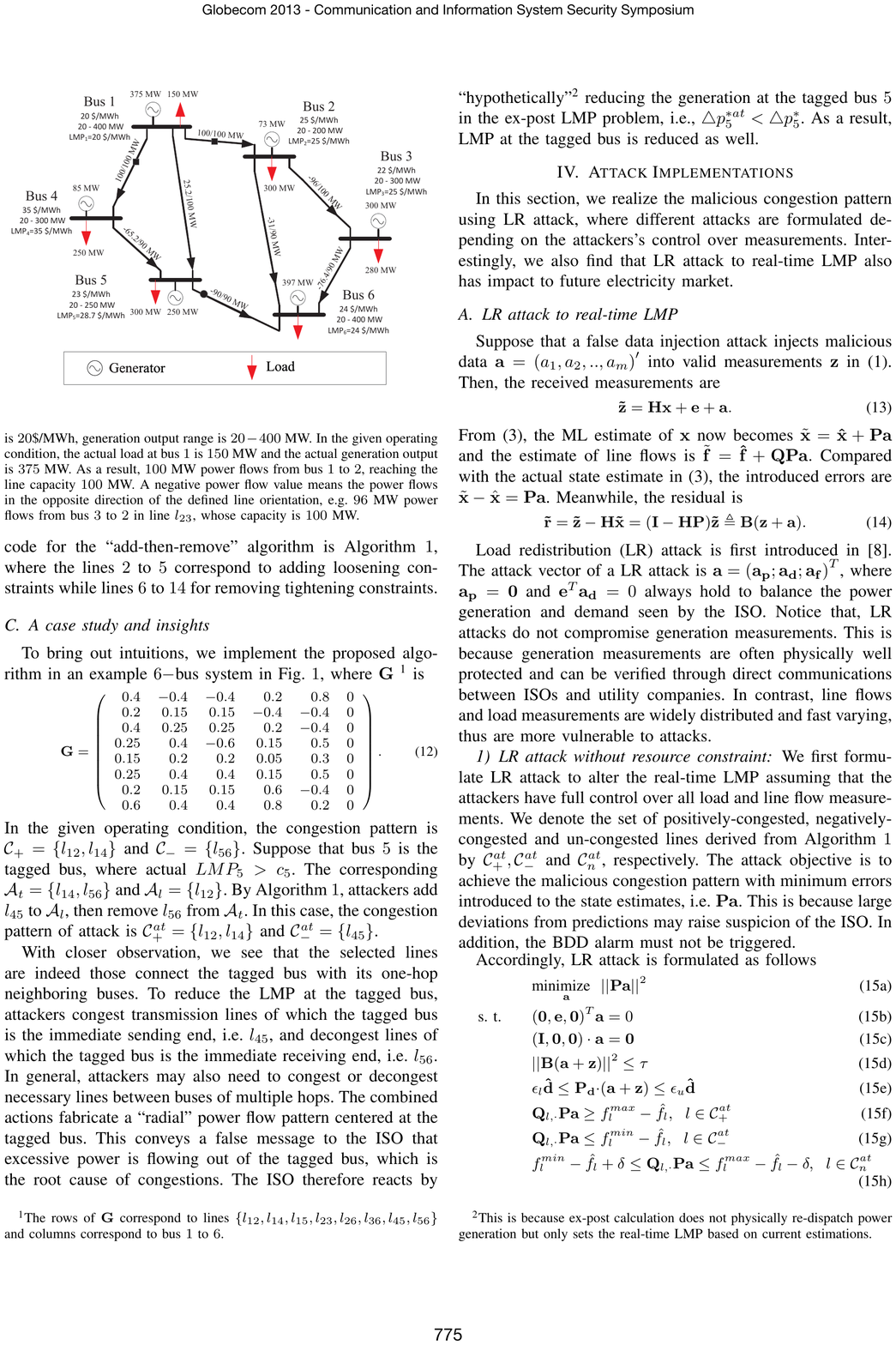}
\caption{The 6-bus example network from \cite{bi2013false}, used to illustrate the effect of curtailment.}
\label{6-bus}
\end{figure}

%%%%%%%%%%%%%%%%%%%%%%%%%%%%%%%
\subsection{An Illustrative Example}

\begin{figure}[htb]
\centering
\includegraphics[width=2.3in]{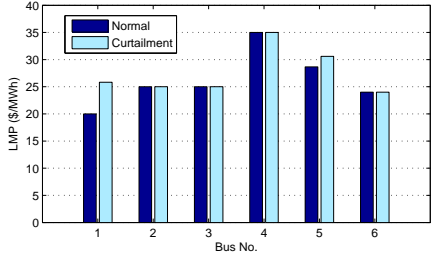}
\caption{The locational marginal prices for the 6-bus example before and after the curtailment.}
\label{bar}
\end{figure}

We use a 6-bus example network from \cite{bi2013false}, in which the amounts of generation are 375.20, 73.00, 299.60, 84.80, 250.00, 397.40 $MW$ (See Fig. \ref{6-bus}). The loads and the original offer prices for the generators can be found in the figure. In this case, %the congestion pattern of the network is $\mathcal{C}_+=\{l_{12},l_{14}\}$ and $\mathcal{C}_-=\{l_{56}\}$,
the lines $l_{12}$, $l_{14}$ and $l_{56}$ are carrying their maximum flow,
and the real-time LMPs are 20.0, 25.0, 25.0, 35.0, 28.7, 24.0 $ \$/{MWh}$, respectively.

Assume that the aggregator owns node 1 and aims to increase its profit by curtailing the generation at this node. It can be seen that by decreasing the generation at node 1 by 0.15, from 375.20 $MW$ to 375.05 $MW$, %the aggregator is able to decongest $l_{12}$ and therefore make the new congestion pattern of $\mathcal{C}_+=\{l_{14}\}$ and $\mathcal{C}_-=\{l_{56}\}$. Due to this manipulation,
the binding and non-binding constraints in problem \eqref{ISO_opt} change, and as a result the ISO determines the new LMPs as 25.8, 25.0, 25.0, 35.0, 30.6, 24.0 $ \$/{MWh}$. Fig. \ref{bar} shows the LMPs, before and after the curtailment. %Assuming the net cost of generation per megawatt hour at bus 1 to be $C_1$,
The curtailment profit is
%\begin{align}
$\gamma =25.8\times 375.05-20\times 375.20$ %+(375.20-375.05)C_1
%\notag\\
$ = 2172\ ^{\$}/_h$ ,%+ 0.15MW\cdot C_1
%\notag
%&\simeq 2172\ ^{\$}/_h\notag
%\end{align}
which means that the aggregator has been able to increase its profit by 2172 $ \$/h$. For dispatch intervals of length 15 minutes, this amount is around \$543 per dispatch interval. 
\subsection{Case Studies}
\begin{figure}[htb]
\centering
\includegraphics[width=0.5\columnwidth]{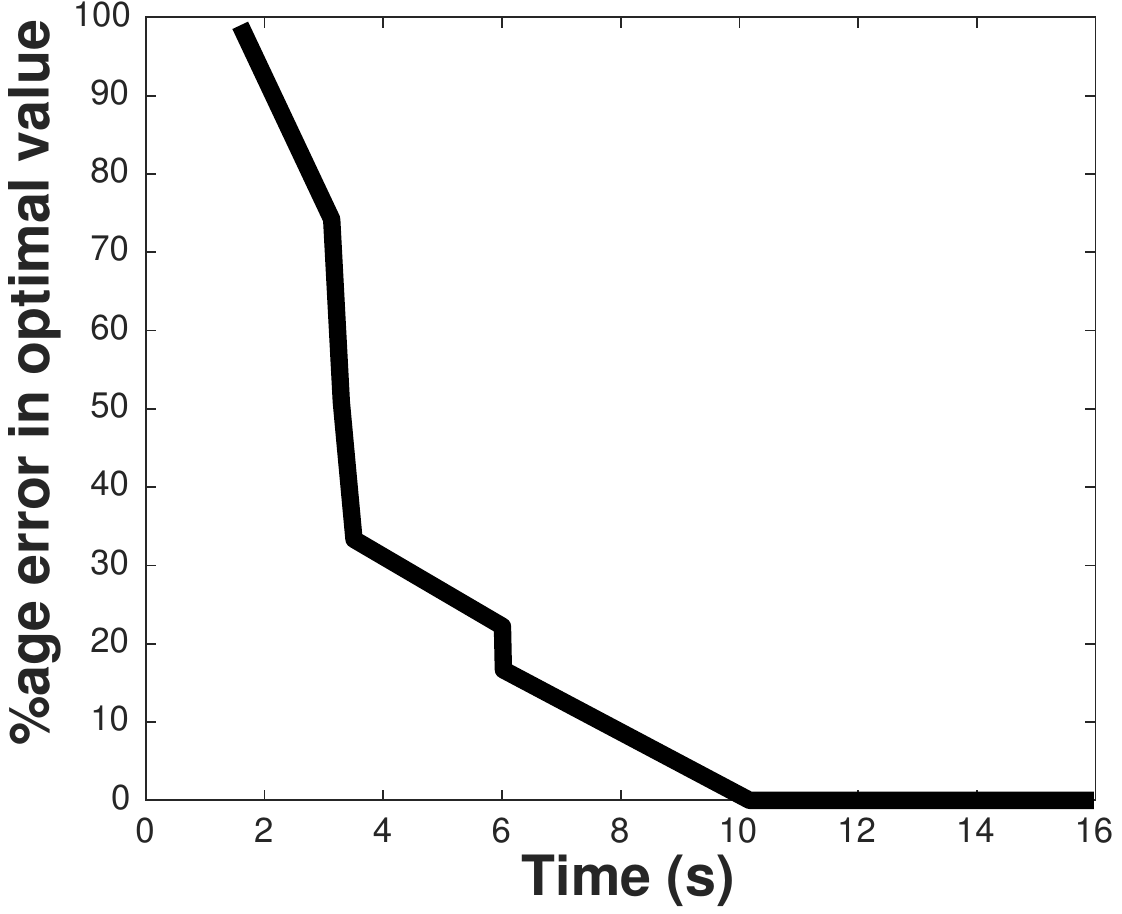}
\caption{The difference from the optimal solution as a function of the running time of the algorithm, on a 9-bus network with 1\% curtailment allowance.}
\label{run-time}
\end{figure}
\begin{figure*}[!t]
\centering
\subfloat{\includegraphics[width=0.5\columnwidth]{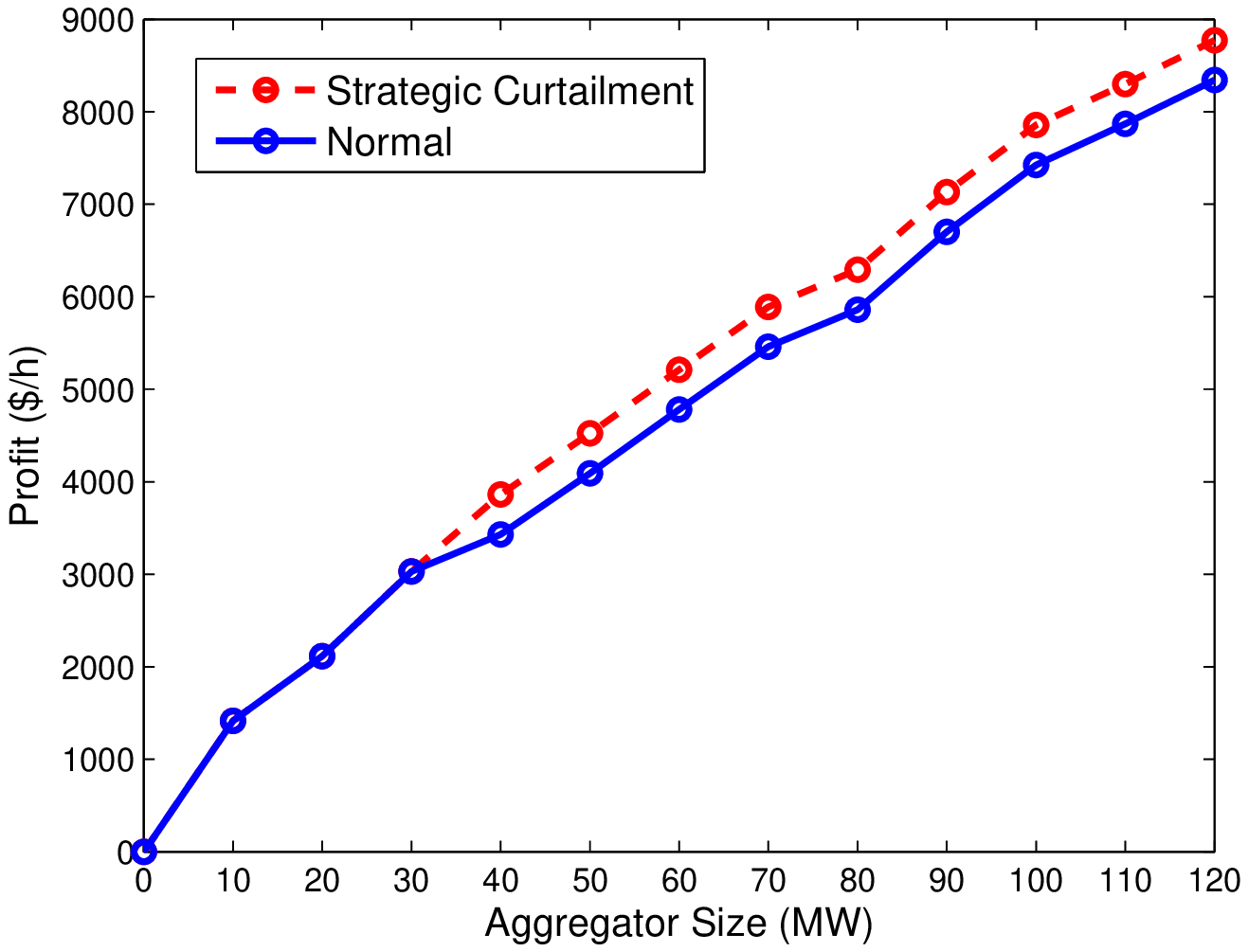}%
\label{profit14}}
%\hfil
\subfloat{\includegraphics[width=0.5\columnwidth]{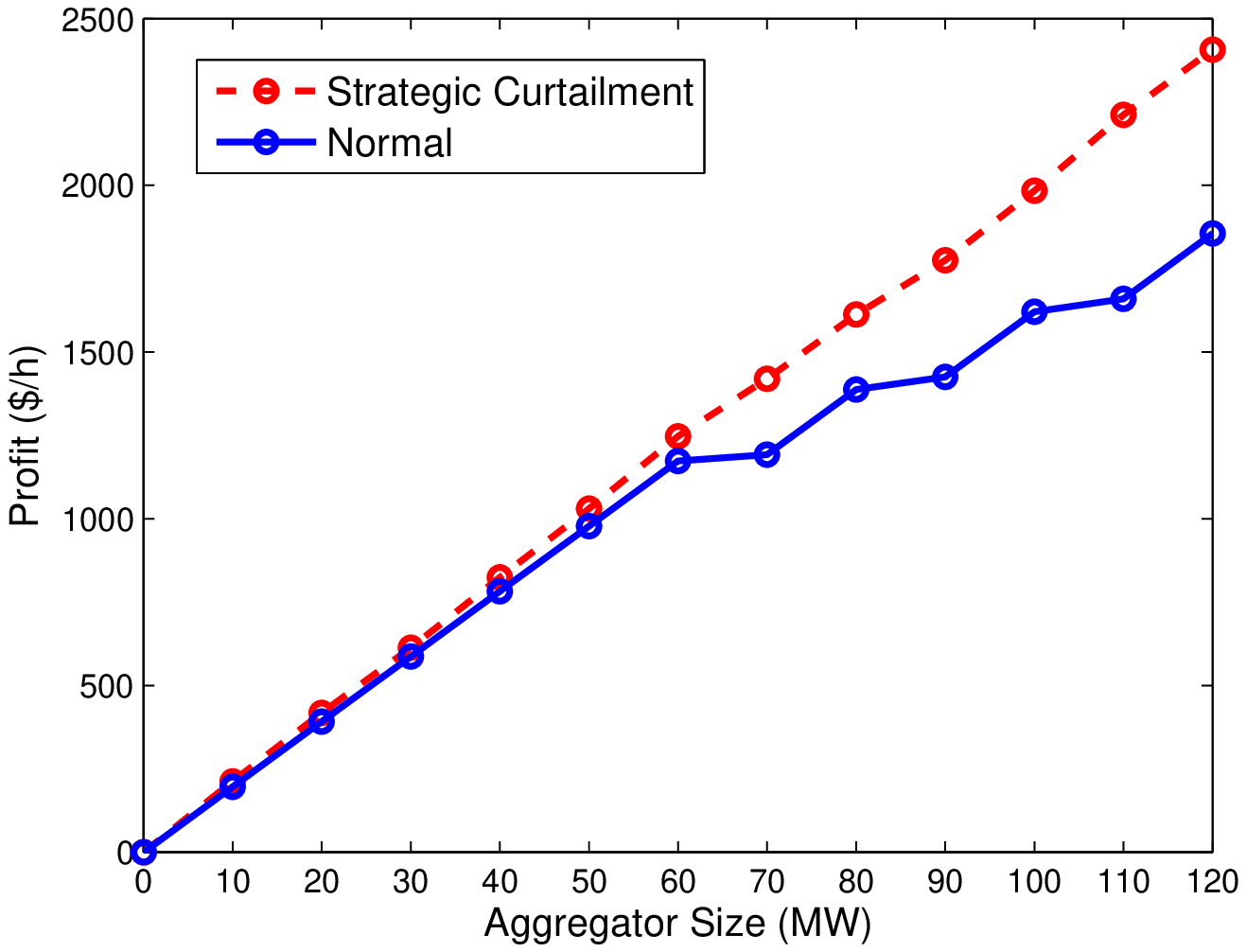}%
\label{profit30}}
%\hfil
\subfloat{\includegraphics[width=0.5\columnwidth]{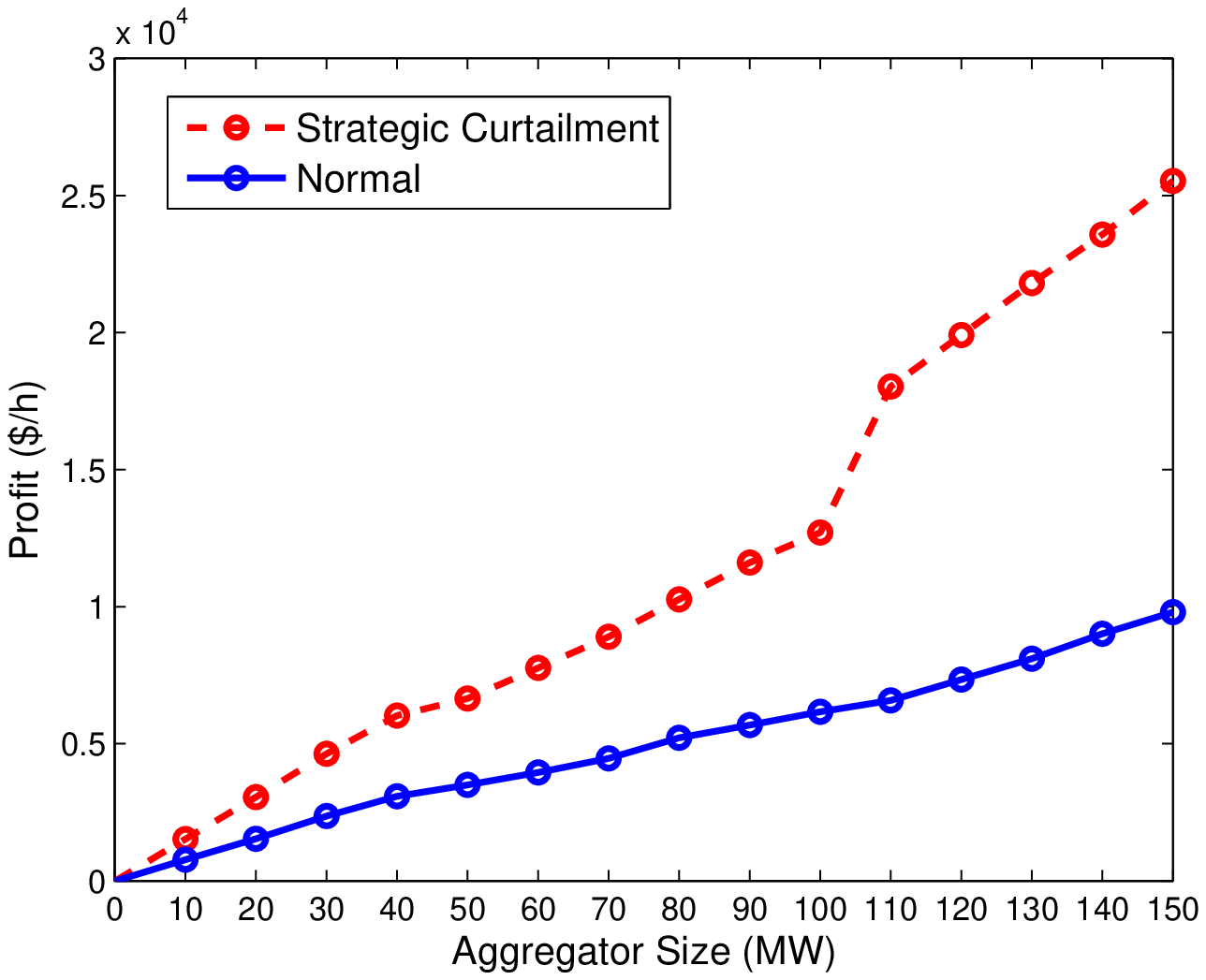}%
\label{profit57}}
\caption{The profit under the normal (no-curtailment) condition and under (optimal) strategic curtailment, as a function of size of the aggregator in IEEE test case networks: a) IEEE 14-Bus Case, b) IEEE 30-Bus Case, and c) IEEE 57-Bus Case. The difference between the blue and red curves, which is the curtailment profit, is non-decreasing.}
\label{fig_profits}
\end{figure*}

We simulate the behavior of aggregators with different sizes, i.e. different number of buses, in a number of different networks. We use the IEEE 14-, 30-, and 57-bus test cases. Since studying market manipulation makes sense only when there is congestion in the network, we scale the demand (or equivalently the line flows) until there is some congestion in the network. In order to examine the profit and market power of aggregator as a function of its size, we assume that the way aggregator grows is by sequentially adding random buses to its set (more or less like the way e.g. a solar firm grows). Then at any fixed set of buses, it can choose different curtailment strategies to maximize its profit. In other words, for each of its nodes it should decide whether to curtail or not (assuming that the amount of curtailment has been fixed to a small portion). We assume that the total generation of the aggregator in each bus is 10 $MW$ and it is able to curtail 1\% of it (0.1 $MW$).

For each of the three networks, Fig. \ref{fig_profits} shows the profit for a random sequence of nodes. Comparing the no-curtailment profit with the strategic-curtailment profit, reveals an interesting phenomenon. As the size of the aggregator (number of its buses) grows, not only does the profit increase (which is expected), but also the difference between the two curves increase, which is the ``curtailment profit.'' More specifically, the latter does not need to happen in theory. However in practice, it is observed most of the time, and it points out that larger aggregators have higher incentive to behave strategically, and they can indeed gain more from curtailment.

%The other important thing to consider is the impact of the aggregator on the price of each bus of the network (not necessarily just its own buses). This plays a very important role in many scenarios like the effect of these coordinated manipulations on consumers or the effect of competing firms on each other. %Fig. \ref{fig_lmps} shows the possible effect of strategic curtailments on each bus of the network for IEEE 30-bus case. As we expect, this is also an increasing function of the size of the aggregator, and it also shows that some buses can be manipulated with very few nodes while some others require the aggregator to have at least certain number of buses.
%Fig. \ref{Impact} shows a heat map of an aggregator's impact on the prices in the 14-bus network. As one can see sometimes the impact on the neighboring buses is even higher than that on the aggregator bus itself.

Finally, to demonstrate the performance of our approximation algorithm on acyclic networks, we plot the suboptimality gap versus running time of the algorithm (Fig. \ref{run-time}) on a radial version of the IEEE 9-bus network (taken from \cite{AndyKocuk} ). It can be seen that with the optimal solution can be computed with $0$ error in $10s$.

\section{Concluding Remarks}
Understanding the potential for market manipulation by aggregators is crucial for electricity market efficiency in the new era of renewable energy. In this paper, we characterized the profit an aggregator can make by strategically curtailing generation in the ex-post market as the outcome of a bi-level optimization problem. This model captures the realistic price clearing mechanism in the electricity market. With this formulation when the aggregator is located in a single bus, we have shown that the locational marginal price is monotonically increasing with the curtailment; the profit directly correlates with market power defined in the literature, and we have an exact polynomial-time algorithm to solve the aggregators profit maximization problem.

The aggregator's strategic curtailment problem in a general setting is a difficult bi-level optimization problem, which is intractable. However, we show that for radial distribution networks (where aggregators are likely located) there is an efficient algorithm to approximate the solution up to arbitrary precision. Finally, we show via simulations on realistic test cases that, first, there is potentially large profit for aggregators by manipulating the LMPs in the electricity market, and second, our algorithm can efficiently find the (approximately) optimal curtailment strategy.

We view this paper as a first step in understanding market power of aggregators, and more generally, towards market design for integrating renewable energy and demand response from geographically distributed sources. With the result of our paper, it is interesting to ask what can the operator do to address this problem. In particular, how to design market rules for aggregators to maximize the contribution of renewable energy yet mitigate the exercise of market power. Also, extending the analysis to the case of multiple aggregators in the market is another interesting direction for future research.

\bibliographystyle{IEEEtran}
\bibliography{references}

\appendix 
\subsection{Proof of Lemma~\ref{lem: mono}}\label{sec:monoproof}
Let us take a look at the ISO's optimization problem \eqref{ISO_opt}, which is a linear program. It is not hard to see that the dual of this problem is as follows.
\begin{subequations}\label{ISO_dual}
\begin{align}
& \underset{\boldsymbol{\lambda}^-,\boldsymbol{\lambda}^+,\boldsymbol{\mu}^-,\boldsymbol{\mu}^+,\boldsymbol{\nu}}{\text{maximize}} \quad (\mathbf{\dpl+p-\boldsymbol{\alpha}-d})^T\boldsymbol{\lambda^-}+\notag\\
& \qquad \qquad (\mathbf{-p+\boldsymbol{\alpha}+d-\dpu})^T\boldsymbol{\lambda^+}+\mathbf{\fl}^T\boldsymbol{\mu^-}-\mathbf{\fu}^T\boldsymbol{\mu^+}\\
&  \text{subject to} \quad\notag\\
& \mathbf{B}^T(\boldsymbol{c+\lambda^+-\lambda^-})-\boldsymbol{\mu}^-+\boldsymbol{\mu}^++\mathbf{H}^T\boldsymbol{\nu} = \mathbf{0}\\
& \boldsymbol{\lambda}^-,\boldsymbol{\lambda}^+,\boldsymbol{\mu}^-,\boldsymbol{\mu}^+\geq 0
\end{align}
\end{subequations}

If one focuses on the terms involving $\alpha_i$ for a certain $i$, the objective of the above optimization problem is in the form: $(\dpl_i+p_i-\alpha_i-d_i)\lambda^-_i+(-p_i+\alpha_i+d_i-\dpu_i)\lambda^+_i$ plus a linear function of the rest of the variables (i.e. the rest of $\boldsymbol{\lambda}^-,\boldsymbol{\lambda}^+$, as well as $\boldsymbol{\mu}^-,\boldsymbol{\mu}^+,\boldsymbol{\nu}$). There is no $\boldsymbol{\alpha}$ in the constraints, and the first two terms of this objective are the only parts where $\alpha_i$ appears (and with opposite signs).

We need to show that if $\alpha_i$ is changed to $\alpha_i+\delta$ for some $\delta>0$, then $c_i+\lambda_i^{+new}-\lambda_i^{-new}\geq c_i+\lambda_i^+-\lambda_i^-$, where $\lambda_i^{+new}, \lambda_i^{-new}$ are the optimal solutions of the new problem. 

We prove this in a general setting. Consider the following two optimization problems.
\begin{subequations}\label{opt1}
\begin{align}
f^* = & \underset{\substack{x_1,x_2\in\mathbb{R}\\ x_3\in\mathbb{R}^m}}{\sup} && a_1x_1+a_2x_2+a_3^Tx_3\\
& \quad \text{ s.t.} && (x_1,x_2,x_3)\in S
\end{align}
\end{subequations}
\begin{subequations}\label{opt2}
\begin{align}
f^{*new} = & \underset{\substack{x_1,x_2\in\mathbb{R}\\ x_3\in\mathbb{R}^m}}{\sup} && (a_1-\delta)x_1+(a_2+\delta)x_2+a_3^Tx_3\\
& \quad \text{ s.t.} && (x_1,x_2,x_3)\in S
\end{align}
\end{subequations}
Assume that the optimal values of the problems are attained at $(x_1^*,x_2^*,x_3^*)$ and $(x_1^{*new},x_2^{*new},x_3^{*new})$, respectively.

We claim that $x_2^{*new}-x_1^{*new}\geq x_2^*-x_1^*$ (This precisely implies the LMP condition in our case, i.e. $\lambda_i^{+new}-\lambda_i^{-new}\geq \lambda_i^+-\lambda_i^-$).

Suppose by way of contradiction that $x_2^{*new}-x_1^{*new}<x_2^*-x_1^*$.\\
We know that $a_1x_1^*+a_2x_2^*+a_3^Tx_3^*\geq a_1x_1+a_2x_2+a_3^Tx_3, \ \forall (x_1,x_2,x_3)\in S$.\\
Therefore we have
\begin{align*}
&(a_1-\delta)x_1^{*new}+(a_2+\delta)x_2^{*new}+a_3^Tx_3^{*new}\\
&=a_1x_1^{*new}+a_2x_2^{*new}+a_3^Tx_3^{*new}-\delta x_1^{*new}+\delta x_2^{*new}\\
&\leq a_1x_1^*+a_2x_2^*+a_3^Tx_3^*+\delta(x_2^{*new}-x_1^{*new})\\
&< a_1x_1^*+a_2x_2^*+a_3^Tx_3^*+\delta(x_2^*-x_1^*)\\
&=(a_1-\delta)x_1^*+(a_2+\delta)x_2^*+a_3^Tx_3^* .
\end{align*}
The first inequality above follows from the fact that $(x_1^{*new},x_2^{*new},x_3^{*new})\in S$. Now the above implies that $(x_1^{*new},x_2^{*new},x_3^{*new})$ is not the optimal solution of \eqref{opt2}, and it is a contradiction.

As a result, $x_2^{*new}-x_1^{*new}\geq x_2^*-x_1^*$. \qed

%\subsection{Proof of Theorem \ref{thm: single_bus_opt}}

\subsection{Proof of Theorem~\ref{thm: single_bus_opt}}\label{sec:exactproof}
Since we are in the single-bus curtailment regime, $\alpha$ has only one nonzero component. For the sake of convenience, we denote that element itself by a scalar $\alpha$ throughout this proof (no $\alpha$ is vector in this proof). The proof consists of the following two pieces: 1) From each jump point, the point where the next jump happens can be computed in polynomial time, 2) There are at most polynomially (in this case even linearly) many jumps.

Assuming that the solution to the program \eqref{ISO_opt} is unique, for any fixed value of $\alpha$, exactly $t$ of the constraints (\ref{const: p},\ref{const: f},\ref{const: range}) are binding (active). We can express these binding constraints as
$$Af=b(\alpha) ,$$
where $A\in\mathbb{R}^{t\times t}$ is an invertible matrix, and $b(\alpha)\in\mathbb{R}^t$ is a vector that depends on $\alpha$.
As long as the binding constraints do not change, the matrix $A$ is fixed and the optimal solution is linear in $\alpha$ (i.e. $f=A^{-1}b(\alpha)$).
Then, for simplicity, we can express the solution as $f(\alpha)=f_0+\alpha a$, for some $t$-vectors $f_0$ and $a$.

Now if we look at the non-binding (inactive) constraints of \eqref{ISO_opt}, they can also be expressed as
$$\tilde{A}f< \tilde{b} ,$$
for some matrix $\tilde{A}$ and vector $\tilde{b}$ of appropriate dimensions. Inserting $f$ into this set of inequalities yields $\tilde{A}f_0+\alpha\tilde{A}a<\tilde{b}$, or equivalently
$$\alpha(\tilde{A}a)_i<\tilde{b}_i-(\tilde{A}f_0)_i ,$$
for all $i=1,2,\dots,(2n+2t-rank(G))$.

Now we need to figure out that, with increasing $\alpha$, which of the non-binding constraints becomes binding first and with exactly how much increase in $\alpha$. If for some $i$ we have $(\tilde{A}a)_i\leq 0$, then it is clear that increasing $\alpha$ cannot make constraint $i$ binding. If $(\tilde{A}a)_i>0$ then the constraint can be written as
$$\alpha<\frac{\tilde{b}_i-(\tilde{A}f_0)_i}{(\tilde{A}a)_i} .$$

Computing the right-hand side for all $i$, and taking their minimum, tells us exactly which constraint will become binding next and how much change in the current value of $\alpha$ results in that.

The complexity of this procedure is $O(t^{2.373})$ for computing $f_0$ and $a$, plus $O(t(2n+2t))=O(nt+t^2)$ for computing the lowest bound among all the constraints. Hence the complexity is $O(t^{2.373})$.

The above procedure describes how the next jump point can be computed efficiently from the current point. The exact same procedure can be repeated for reaching the subsequent jump points. All remains to show is the second piece of the proof, which is that the number of jump points are bounded polynomially.
To show the last part note that by increasing $\alpha$, if a binding constraint becomes non-binding, it will not become binding again. As a result, each constraint can change at most twice, and therefore the number of jumps is at most twice the number of constraints. Thus, the number of jumps is $O(n+t)$, and the overall complexity of the algorithm is $O((n+t)t^{2.373})=O(t^{3.373})$. \qed
\subsection{Proof of Proposition~\ref{thm: marketpower}}\label{sec:marketpowerproof}
From the definition of $\gamma(\alpha^*) = \lambda_i(\alpha^*)(p_i^a-\alpha_i^*)-\lambda_i(0)p_i^a$ it follows that
\begin{align}
\frac{\gamma(\alpha^*)}{\lambda_i(0)(p_i^a-\alpha_i^*)} &= \frac{\lambda_i(\alpha^*)}{\lambda_i(0)} - \frac{p_i^a}{p_i^a-\alpha_i^*}\notag\\
&= 1+ \frac{\lambda_i(\alpha^*)-\lambda_i(0)}{\lambda_i(0)}-(1-\frac{\alpha_i^*}{p_i^a})^{-1}\notag\\
&\simeq 1+ \frac{\lambda_i(\alpha^*)-\lambda_i(0)}{\lambda_i(0)}-(1+\frac{\alpha_i^*}{p_i^a})\notag\\
&= \frac{\lambda_i(\alpha^*)-\lambda_i(0)}{\lambda_i(0)}-\frac{\alpha_i^*}{p_i^a}\label{eq: marketpower_appx} .
\end{align}
Therefore we have
\begin{align*}
\frac{p_i^a}{\lambda_i(0)(p_i^a-\alpha_i^*)\alpha_i^*}\gamma(\alpha^*) &= \left(\frac{\lambda_i(\alpha^*)-\lambda_i(0)}{\lambda_i(0)}\right)/\left(\frac{\alpha_i^*}{p_i^a}\right) - 1\\
&= \eta_i-1 .
\end{align*}
Since the left-hand side parameters are all positive, if $\gamma(\alpha^*)>0$, we can conclude that $\eta_i>1$. Moreover, it is clear that the larger the value of $\gamma(\alpha^*)$ is, the higher the value of $\eta_i$ is. Note that we used the approximation $(1-\frac{\alpha_i^*}{p_i^a})^{-1} \simeq 1+\frac{\alpha_i^*}{p_i^a}$, since the curtailment is small with respect to the generation; however, the right-hand side expression \eqref{eq: marketpower_appx} is an upper bound on the left-hand side anyway, and the result holds exactly. \qed
\subsection{Proof of Theorem~\ref{thm:Main}}\label{sec:App}

\begin{lemma}[$\delta$-discretization]\label{lem:Disc}
Given a set $C \subset [\underline{L_1},\overline{L_1}]\times\dots\times[\underline{L_k},\overline{L_k}]$, there exists a finite set $\Xcal$ such that 
\[\forall z \in C \quad \exists z^\prime \in \Xcal, \max_{1 \leq i \leq k} |z_i-z^\prime_i| \leq \delta\] 	and $\Xcal$ contains at most $V/\delta^k$ points, where $V=\prod_{i=1}^k (\overline{L_i}-\underline{L_i})$ is a constant (the volume of the box). $\Xcal$ is said to be an $\delta$-discretization of $C$ and written as $\Xcal\br{\delta}$.
\end{lemma}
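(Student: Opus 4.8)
The plan is to prove this by a completely standard uniform-grid covering argument. The set $C$ itself plays essentially no role beyond being contained in the box, so I would forget about $C$ and instead cover the whole bounding box $B=[\underline{L_1},\overline{L_1}]\times\dots\times[\underline{L_k},\overline{L_k}]$; any covering of $B$ at scale $\delta$ automatically covers the subset $C\subseteq B$.

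First I would discretize each coordinate axis separately. For coordinate $i$, partition $[\underline{L_i},\overline{L_i}]$ into $N_i=\lceil(\overline{L_i}-\underline{L_i})/\delta\rceil$ consecutive half-open subintervals, each of length at most $\delta$, and record their left endpoints $x_{i,1}<\dots<x_{i,N_i}$. I would then define $\Xcal$ to be the Cartesian product of these endpoint sets, i.e.\ the grid of points $(x_{1,m_1},\dots,x_{k,m_k})$ with $1\le m_i\le N_i$, so that $|\Xcal|=\prod_{i=1}^k N_i$. The covering property is then immediate: given any $z\in C\subseteq B$, each coordinate $z_i$ lies in exactly one subinterval of axis $i$, say the one with left endpoint $x_{i,m_i}$, and since that subinterval has length at most $\delta$ we get $|z_i-x_{i,m_i}|\le\delta$. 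Taking $z'=(x_{1,m_1},\dots,x_{k,m_k})\in\Xcal$ yields $\max_{1\le i\le k}|z_i-z'_i|\le\delta$, which is exactly the required guarantee.

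It remains to bound the cardinality, and this is the only mildly delicate point -- the main obstacle is the rounding in the ceiling. We have $|\Xcal|=\prod_{i=1}^k\lceil(\overline{L_i}-\underline{L_i})/\delta\rceil$. When $\delta$ divides each side length this equals $\prod_{i=1}^k (\overline{L_i}-\underline{L_i})/\delta = V/\delta^k$ exactly, matching the stated bound. In general, $\lceil x\rceil\le x+1$ only gives $|\Xcal|\le\prod_{i=1}^k\bigl((\overline{L_i}-\underline{L_i})/\delta+1\bigr)$, whose leading term as $\delta\to 0$ is $V/\delta^k$ while the remaining terms are of lower order in $1/\delta$. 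I would therefore either state the bound as the dominant term $V/\delta^k$, or note that one may replace $\delta$ by a slightly smaller value that divides every side length (which only tightens the approximation and keeps the count below $V/\delta^k$), so that $V/\delta^k$ holds as a genuine upper bound. Since the downstream use of the lemma in Theorem~\ref{thm:Main} only needs a bound polynomial in $1/\delta$ with exponent $k$, this rounding subtlety has no effect on the final complexity claim.
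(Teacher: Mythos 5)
First, a point of comparison: the paper itself states Lemma~\ref{lem:Disc} \emph{without any proof}---it is invoked as a standard covering fact---so your proposal is measured against the obvious intended uniform-grid argument rather than a written one. Your construction and covering step are fine: every $z$ in the bounding box lands in some axis-aligned cell and is within $\delta$ in every coordinate of that cell's left corner, and restricting attention from $C$ to the whole box is harmless. The genuine problem is in the cardinality bound, specifically in your proposed repair. Your grid has $\prod_{i=1}^k \lceil (\overline{L_i}-\underline{L_i})/\delta\rceil$ points, which as you note can exceed $V/\delta^k$; but the fix of ``replacing $\delta$ by a slightly smaller $\delta'$ that divides every side length'' goes the wrong way: a finer grid has \emph{more} points, namely exactly $V/(\delta')^k > V/\delta^k$ of them, so this does not keep the count below $V/\delta^k$. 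Your other out---reading the bound as only the dominant term---is not a proof of the stated lemma but a weakening of it.

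The standard way to actually achieve the stated constant is to exploit the fact that an $\ell_\infty$ ball of radius $\delta$ has side length $2\delta$: place the grid points at the \emph{centers} of cells of side $2\delta$, i.e.\ use spacing $2\delta$ per axis rather than $\delta$. Every point of the box is then within $\delta$ of a grid point in each coordinate, and the number of points is
$\prod_{i=1}^k \lceil (\overline{L_i}-\underline{L_i})/(2\delta)\rceil \le \prod_{i=1}^k \bigl( (\overline{L_i}-\underline{L_i})/(2\delta)+1 \bigr) \le \prod_{i=1}^k (\overline{L_i}-\underline{L_i})/\delta = V/\delta^k$,
where the last inequality holds as soon as $2\delta \le \overline{L_i}-\underline{L_i}$ for every $i$. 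Some such small-$\delta$ restriction is unavoidable: if $\delta$ exceeds the side lengths then $V/\delta^k$ drops below one and no nonempty covering can satisfy the bound, so the lemma is implicitly a small-$\delta$ statement---and that is the only regime used in Theorem~\ref{thm:Main}, where $\delta \le \epsilon/c'$ with $\epsilon \to 0$. With the centered grid your argument becomes complete; and you are right that, either way, the downstream complexity claim only needs $|\mathcal{X}| = O(1/\delta^k)$ with a parameter-dependent constant, so the final running-time bound is unaffected.
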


\begin{lemma}[Reduction to Binary Tree]
Any tree with arbitrary degrees can be reduced to a binary tree by introducing additional dummy nodes to the network.
\end{lemma}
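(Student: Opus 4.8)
The plan is to prove this as a statement about the \emph{optimization problem} \eqref{opt:aggrTree}, not merely about graph topology: the graph-theoretic fact that a high-degree tree can be rewired into a bounded-degree one is standard, so the real content is that the rewiring preserves the feasible set and the objective of \eqref{opt:aggrTree}. I would therefore describe a single local replacement operation and then apply it at every node whose degree exceeds three, arguing equivalence of the two optimization instances.

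First I would fix a node $i$ with children $k_1,\dots,k_m$, $m>2$, and replace it by a chain of dummy nodes $v_1=i,v_2,\dots,v_{m-1}$: here $v_1$ retains all of the original data of $i$ (its $p_i,d_i,p_i^a,\dpl_i,\dpu_i,c_i$) and has children $k_1$ and $v_2$; each interior $v_j$ has children $k_j$ and $v_{j+1}$; and $v_{m-1}$ has children $k_{m-1}$ and $k_m$. Every newly introduced dummy node $v_j$ with $j\geq 2$ is given no generation, load, or curtailment capacity ($p=d=p^a=0$, hence $\alpha=0$) together with a \emph{tight} net-injection box $\dpl=\dpu=0$, while each dummy edge $\br{v_{j-1},v_j}$ is given flow limits loose enough never to bind (formally $\fl\to-\infty,\fu\to+\infty$).

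Next I would exhibit the value-preserving bijection between feasible points of the two problems. The tight box at each dummy node forces $p^{net}\br{v_j}=0$, i.e. the flow-conservation identity $f_{v_j}=f_{k_j}+f_{v_{j+1}}$, which unrolls to $f_{v_j}=\sum_{l\geq j}f_{k_l}$; substituting this into the net-injection expression at $v_1=i$ recovers \emph{exactly} the original constraint \eqref{eq:DCPF} at $i$, namely $\dpl_i\leq \sum_{l=1}^m f_{k_l}-f_i-\br{p_i-\alpha_i-d_i}\leq \dpu_i$. Because each dummy edge is uncongested, the edge conditions of \eqref{opt:aggrTree} pin $\lambda_{v_j}=\lambda_{v_{j-1}}$, so every dummy LMP equals $\lambda_i$; consequently the edge $\br{k_j,v_j}$, which inherits the flow limits of the original edge $\br{k_j,i}$, reproduces the original price relation $\lambda_{k_j}-\lambda_i$ with identical binding status. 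Finally, since every dummy node has $p^a=0$ it contributes $\lambda_{v_j}\br{p^a-\alpha}=0$ to the objective, so $\sum_i\lambda_i\br{p_i^a-\alpha_i}$ is unchanged. Iterating over all high-degree nodes yields a binary tree, and because the number of dummy nodes added is $\sum_i\br{\deg(i)-2}^+\leq \sum_i \deg(i)=2(n-1)=O(n)$, the blow-up is linear, consistent with the running-time claim of Theorem~\ref{thm:Main}.

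The main obstacle I anticipate is the degenerate net-injection constraint at the dummy nodes: the three-way LMP condition in \eqref{opt:aggrTree} is written for $\dpl_i<\dpu_i$, whereas I deliberately set $\dpl=\dpu=0$. I would resolve this by returning to the KKT origin of that condition: when the net-injection constraint is an equality, its multiplier $\lambda_{v_j}^+-\lambda_{v_j}^-$ is free in sign, so $\lambda_{v_j}=c_{v_j}+\lambda_{v_j}^+-\lambda_{v_j}^-$ is \emph{unconstrained} at a dummy node. Thus a dummy node imposes no node-level price condition and merely transports the LMP down the chain, exactly as a pass-through node should. Verifying this carefully — and confirming that a loose dummy edge can never bind, so the transport is clean and the congestion status on the relocated child edges is genuinely preserved — is the one place where the argument needs real care rather than bookkeeping.
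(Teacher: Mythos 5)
Your proof is correct, and it rests on the same key device as the paper's: dummy pass-through nodes with zero net injection, so that flow conservation at the original high-degree node is enforced by the dummies' constraints. The constructions differ in shape and in depth of verification. The paper expands a node with $k>2$ children into a \emph{balanced} binary gadget of dummy nodes (choosing $m$ with $2^m<k\leq 2^{m+1}$ and building $m$ levels), whereas you unroll the children along a chain $v_1=i,v_2,\dots,v_{m-1}$; both use $O(k)$ dummies per node and give a linear overall blow-up, so either serves Theorem~\ref{thm:Main} equally well, since the dynamic-programming complexity depends on node count, not depth. Where your argument genuinely goes beyond the paper is in establishing that the \emph{optimization problems} are equivalent, not just the graphs: the paper's proof stops at ``zero injection enforces flow conservation,'' while you specify the dummy data ($\dpl=\dpu=0$, dummy edges whose limits never bind, original limits moved to the relocated child edges), exhibit the value-preserving bijection, check that the objective is untouched, and---most importantly---notice that the three-way nodal price condition in \eqref{opt:aggrTree} degenerates when $\dpl=\dpu$, resolving it by returning to the KKT/complementarity origin (equivalently, the product form used in the paper's appendix, where both products vanish identically at a dummy node), so dummy LMPs are unconstrained and simply transported along the uncongested chain. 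That degeneracy is silently present in the paper's own construction as well, so your handling of it fills a real gap in the published one-line argument.
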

\begin{proof}
Take any node $b$ in the tree with some parent $a$ and $k>2$ children $c_1,\dots,c_k$.
There exists $m>0$ such that $2^m<k\leq 2^{m+1}$ for some $m$. We will show that this subgraph can be made a binary tree by introducing $O(k)$ dummy nodes (in $m$ levels) between $b$ and its children. The additional nodes and edges are defined as follows:
$$b\rightarrow b_0,\ b\rightarrow b_1,$$
$$b_0\rightarrow b_{00},\ b_0\rightarrow b_{01},\ b_1\rightarrow b_{10},\ b_1\rightarrow b_{11},$$
$$b_{00}\rightarrow b_{000},\ b_{00}\rightarrow b_{001},\ \dots,\ b_{11}\rightarrow b_{111},$$
up to $m$ levels:
$$b_{0\dots 00}\rightarrow c_1,\ b_{0\dots 00}\rightarrow c_2,\ b_{0\dots 01}\rightarrow c_3\ \dots$$

This is transparently a binary tree with $O(k)$ nodes. Each of the new nodes has zero injection, and effectively the incoming flow from its parent is just split in some way between its children. This in fact enforces the flow conservation constraint at $b$. Similar construction can be applied to any node of the tree with more than two children, until no such node exists. It can be seen that the number of nodes in the new graph is still linear in $n$.
\end{proof}

So any tree can be transformed to a binary one by the above procedure. For the rest of the analysis we focus on the $\epsilon$-approximation of the dynamic program on the resulting binary tree. The optimization problem \eqref{agg_opt} on a binary tree, can be written after some algebra as the following.
\begin{subequations}
\begin{flalign}
& \underset{\boldsymbol{\lambda},\mathbf{f},\mathbf{\alpha}}{\max}\qquad \sum\limits_{i=1}^n \lambda_i (p_i-\alpha_i)&\\
& \text{subject to}\notag
\end{flalign}
\begin{equation}
0\leq \alpha_i\leq p_i^a, \quad i=1,\dots,n
\end{equation}
\begin{multline}
\dpl_i \leq f_{c_1(i)}+f_{c_2(i)}-f_i-p_i+\alpha_i+d_i \leq \dpu_i ,\\
\quad i=1,\dots,n
\end{multline}
\begin{equation}
\fl_i \leq f_i \leq \fu_i, \quad i=2,\dots,n
\end{equation}
\begin{multline}
\begin{cases}
(\lambda_i-c_i)(f_{c_1(i)}+f_{c_2(i)}-f_{i}-p_i+\alpha_i+d_i-\dpl_i)\geq 0\\
(\lambda_i-c_i)(f_{c_1(i)}+f_{c_2(i)}-f_{i}-p_i+\alpha_i+d_i-\dpu_i)\geq 0
\end{cases} ,\\
i = 1, \dots, n
\end{multline}
\begin{multline}
\begin{cases}
(\lambda_{i}-\lambda_{c_j(i)})(\fl_{c_j(i)}-f_{c_j(i)})\geq 0\\
(\lambda_{i}-\lambda_{c_j(i)})(\fu_{c_j(i)}-f_{c_j(i)})\geq 0
\end{cases} ,\\
i = 1, \dots, n, \quad j = 1,2
\end{multline}
\end{subequations}

The constraints $0\leq \alpha_i\leq p_i^a$ and $\fl_i\leq f_i\leq \fu_i$, along with a prior bound on lambda $\underline{\lambda}\leq \lambda\leq \overline{\lambda}$ can be used to define the box where $x_i=(\lambda_i,f_i,\alpha_i)$ lives. Then an $\epsilon$-accurate solution is a solution to the following problem.
\begin{subequations}\label{opt:rounded2}
\begin{flalign}
& \underset{\boldsymbol{\lambda},\mathbf{f},\mathbf{\alpha}}{\max}\qquad \sum\limits_{i=1}^n \lambda_i (p_i-\alpha_i)&\\
& \text{subject to}\notag
\end{flalign}
\begin{multline}
\dpl_i-\epsilon \leq f_{c_1(i)}+f_{c_2(i)}-f_i-p_i+\alpha_i+d_i \leq \dpu_i+\epsilon ,\\
\quad i=1,\dots,n
\end{multline}
\begin{multline}\label{constraint_example}
\begin{cases}
(\lambda_i-c_i)(f_{c_1(i)}+f_{c_2(i)}-f_{i}-p_i+\alpha_i+d_i-\dpl_i)\geq -\epsilon\\
(\lambda_i-c_i)(f_{c_1(i)}+f_{c_2(i)}-f_{i}-p_i+\alpha_i+d_i-\dpu_i)\geq -\epsilon
\end{cases} ,\\
i = 1, \dots, n
\end{multline}
\begin{multline}
\begin{cases}
(\lambda_{i}-\lambda_{c_j(i)})(\fl_{c_j(i)}-f_{c_j(i)})\geq -\epsilon\\
(\lambda_{i}-\lambda_{c_j(i)})(\fu_{c_j(i)}-f_{c_j(i)})\geq -\epsilon
\end{cases} ,\\
i = 1, \dots, n, \quad j = 1,2
\end{multline}

Assuming a $\delta$-discretization of the constraint set, each of the constraints (as well as $\epsilon$-accuracy of the objective) imposes a bound on the value of $\delta$. For example constraint \eqref{constraint_example} requires $4\delta\leq\epsilon$ (Note that we could have defined different deltas $\delta^{\lambda},\delta^f,\delta^{\alpha}$ for different variables and in that case we had $3\delta^f+\delta^p\leq\epsilon$, but for simplicity we take all the deltas to be the same). Similar bounds on $\delta$ can be obtained from the other constraints, and taking the lowest upper-bound, implies the existence of a constant $c'$ (that depends on the parameters) such that $\delta\leq\epsilon/c'$.

As a result we have a $\delta$-discretization with $|\Xcal|=V/\delta^3=c'^3V/\epsilon^3$ number of points, for any node. Therefore, the computational complexity over any node will be $|\Xcal|^3$, because we have $|\Xcal|$ many values for the node itself and $|\Xcal|$ many values for any of its two children.
Since there are $n$ nodes, the overall complexity of the algorithm will simply be $n|\Xcal|^3=nc'^9V^3/\epsilon^9=cn/\epsilon^9$. \qed

\end{subequations}

\ifCLASSOPTIONcaptionsoff
  \newpage
\fi
\end{document}